\documentclass{article}

\usepackage{arxiv}
\usepackage{placeins}  
\usepackage[utf8]{inputenc} 
\usepackage[T1]{fontenc}    
\usepackage{hyperref}       
\usepackage{url}            
\usepackage{booktabs}       
\usepackage{amsfonts}       
\usepackage{nicefrac}       
\usepackage{microtype}      
\usepackage{lipsum}
\usepackage{graphicx}
\usepackage{diagbox}   
\usepackage{multirow}  
\usepackage{booktabs}  
\usepackage{makecell}
\usepackage{array} 
\usepackage[utf8]{inputenc}
\usepackage{algorithm}
\usepackage{algpseudocode}
\usepackage{hyperref}
\usepackage{authblk}
\usepackage{enumitem}
\usepackage{mathtools}
\usepackage{amssymb}
\usepackage{amsmath}
\usepackage{setspace}
\usepackage{natbib}
\usepackage{amsthm}
\usepackage{graphicx}
\usepackage{orcidlink}
\newtheorem{theorem}{Theorem}
\newtheorem{definition}[theorem]{Definition}

\newtheorem{lemma}[theorem]{Lemma}
\newtheorem{remark}[theorem]{Remark}
\graphicspath{ {./images/} }

\title{Debiased Estimators in High-Dimensional Regression: A Review and Replication of Javanmard and Montanari (2014)}

\author{
 Benjamin Smith\orcidlink{0009-0007-2206-0177} \\
  Department of Statistical Sciences\\
  University of Toronto\\
  Toronto, Canada \\
  \texttt{benyamin.smith@mail.utoronto.ca} \\
}

\begin{document}
\maketitle
\begin{abstract}
High-dimensional statistical settings ($p \gg n$) pose fundamental challenges for classical inference, largely due to bias introduced by regularized estimators such as the LASSO. To address this, \citet{javanmard14a} propose a debiased estimator that enables valid hypothesis testing and confidence interval construction. This report examines their debiased LASSO framework, which yields asymptotically normal estimators in high-dimensional settings. The key theoretical results underlying this approach are presented. Specifically, the construction of an optimized debiased estimator that restores asymptotic normality, which enables the computation of valid confidence intervals and $p$-values. To evaluate the claims of Javanmard and Montanari, a subset of the original simulation study and the real-data analysis is presented. The original empirical analysis is extended to the desparsified LASSO, which is referenced but not implemented in the original study. The results demonstrate that while the debiased LASSO achieves reliable coverage and controls Type I error, the LASSO projection estimator can offer improved power in idealized low-signal settings without compromising error rates. The results reveal a trade-off: the LASSO projection estimator performs well in low-signal settings, while Javanmard and Montanari’s method is more robust to complex correlations, improving precision and signal detection in real data.
\end{abstract}

\keywords{Debiased LASSO \and High-dimensional inference \and Confidence intervals \and Sparse regression}

\newpage
\section{Introduction}

Modern statistical problems are widely recognized as having become increasingly high-dimensional \citep{javanmard14a}. Examples such as genomics \citep{Peng_Zhu_Bergamaschi_Han_Noh_Pollack_Wang_2010}, neuroimaging \citep{Yamashita2008}, asset pricing \citep{Kozak_Nagel_Santosh_2019}, signal processing \citep{lustig2007sparse} and collaborative filtering\footnote{Collaborative filtering is a core technique used in recommender systems that suggests items by finding patterns in user behaviour. It works under the assumption that individuals who agree in the past will agree in the future \citep{Al_Ghuribi_Noah_2021}}\citep{koren2009} have led to settings in which the number of covariates exceeds the number of observations ($p > n$). LASSO regression \citep{Tibshirani_1996} has become a widely applicable and successful technique for addressing such problems. Estimators are derived by optimizing a regularized likelihood function, typically incorporating an $L_1$ penalty to induce sparsity. Because they are defined as the solution to a constrained optimization problem, LASSO estimators are inherently non-linear and do not admit an explicit, closed-form representation.

The non-linear and non-explicit nature of the estimators produced by LASSO estimates results in being unable to characterize the distribution the estimators it produces \citep{Fu_Knight_2000}. This creates a challenge inference as one is unable to preform accepted procedures for characterizing uncertainty via confidence intervals and p-values \citep{wasserman2004all, lehmann2005testing}. As of 2014\footnote{This was the publication date of \citet{javanmard14a} paper.} approaches to quantify statistical significance in high-dimensional parameter estimation settings has been less understood. \citet{Buhlman2013} and \citet{zhang2014confidence} introduced high-dimensional testing frameworks that utilize compatibility conditions to provide deterministic guarantees for significance $\alpha$ and power $1-\beta$ under restricted eigenvalue or compatibility conditions \citep{buhlmann2011statistics}. These results were later strengthened by \citet{javanmard_nips2013}, who established more refined lower bounds for the signal detection threshold.  \citet{Belloni_2013} and  \citet{Belloni_Chernozhukov_Hansen_2014} consider inference for regression models with high-dimensional data. \citet{Meinshausen_2010} and \cite{Minnier_Tian_Cai_2011} studied resampling methods for hypotheses testing for these methods. 

This report reviews and discusses the theoretical framework introduced by \citet{javanmard14a} in their Journal of Machine Learning Research paper titled ``Confidence Intervals and Hypothesis Testing for High-Dimensional Regression''. Specifically, it examines the key lemmas and theorems underlying their approach to debiased inference for LASSO regression and presents a partial replication\footnote{Due to resource constraints, only a subset of the original simulation study is replicated.} of the original simulation study and an analysis of the data used in the original data analysis. The analysis compares their proposed method with the multisample splitting method \citep{Meinshausen_Buhlman_2009, meinshausen2010stability} and the ridge-type projection estimator \citep{Buhlman2013}, as well as with alternative approaches implemented in the \texttt{hdi} R package \citep{hdi_pkg2015}. Furthermore, the comparison is extended to include the desparsified LASSO  estimator (also known as the LASSO projection estimator, used interchangably) \citep{vandegeer2014asymptotically, zhang2014confidence, buhlmann2015misspecified}, a closely related approach to debiased inference that was not formally evaluated (but referenced) in the original paper.

In Section \ref{sec:methodology}, we outline the model setting and the de-biasing algorithm proposed by \citet{javanmard14a}, followed by an overview of technical prerequisites defined in Section \ref{sec:2.1}. Section \ref{sec:2.2} presents the key theoretical results required for formal inference with the debiased estimator. Section \ref{sec:2.3} provides an overview of the primary applications derived from the theorems, including the construction of confidence intervals, hypothesis tests, and extensions to simultaneous inference and settings involving non-Gaussian noise. Section \ref{sec:empirical_analysis} assesses the empirical performance of \citet{javanmard14a}'s debiased LASSO estimator through a partial replication of their simulation studies (Section \ref{sec:3.1}) and a analysis of the riboflavin dataset \citep{buhlmann2014high}. By examining coverage probabilities, interval lengths, and Type I error rates, we compare the proposed method against competing approaches noted, but extend the analyses to include the LASSO projection estimator in order to highlight practical trade-offs. Section \ref{sec:discussion} concludes by combining the theoretical and empirical findings, discussing the practical trade-offs of debiased inference and its implications for high-dimensional statistical analysis in comparison with competing methods such as the LASSO projection estimator.

For reproducibility, the \texttt{R} code used to perform the simulations and data analysis is available at:\\\url{https://github.com/benyamindsmith/javanmard_montanari_2014_replication}

\section{Methodology}
\label{sec:methodology}
We work in a linear regression model setting where $n$ i.i.d.  pairs $(Y_1, X_1), (Y_2,X_2), \dots, (Y_n, X_n)$ are given with vectors $X_i \in \mathbb{R}^p$ and response variables defined as
\begin{equation}
    Y_i= \langle\theta_0, X_i\rangle + W_i, \quad W_i \sim N(0, \sigma^2)
    \label{eq:mod_1dim}
\end{equation}
Where $\theta_0\in \mathbb{R}^p$ and $\langle\cdot, \cdot\rangle$ is the standard scalar product in $\mathbb{R}^p$. In matrix form, denote $Y=(Y_1, \dots, Y_n)^\top$ and $\mathbf{X} \in \mathbb{R}^{n\times p}$. This gives
\begin{equation}
    Y = \mathbf{X}\theta_0 + W, \quad W \sim N(0, \sigma^2 I_{n\times n})
    \label{eq:mod_mat_form}
\end{equation}
The goal is to estimate the unknown (but fixed) parameter vector $\theta_0\in \mathbb{R}^p$. In a classical regression setting where $n \gg p$, the estimation method of choice is usually ordinary least squares, i.e. $\hat{\theta}^{\text{OLS}}= (\mathbf{X}^\top\mathbf{X})^{-1}\mathbf{X}^\top Y$. By definition, $\hat{\theta}^{\text{OLS}}$ is Gaussian with mean $\theta_0$ and and covariance $\sigma^2(\mathbf{X}^\top \mathbf{X})^{-1}$. This allows for typical inference techniques such as confidence intervals, hypothesis tests and construction of P-values. 

In a high dimensional setting where the number of parameters exceeds the number of observations ($p > n$), the matrix $(X^\top X)$ is rank deficient and one is required to resort to biased estimators \citep{buhlmann2011statistics}. The LASSO \citep{chen_donho_1995,Tibshirani_1996} has become a widely used method that produces sparse estimates by incorporating an $\ell_1$ penalty into the objective function:
\begin{equation}
    \hat{\theta}^n(Y, \mathbf{X};\lambda) = \arg \underset{\theta \in \mathbb{R}^p}{\min} \left\{\frac{1}{2n}\|Y - \mathbf{X}\theta\|^2_2 + \lambda \|\theta\|_1\right\}
\label{eq:lasso_eq}
\end{equation}

The $\ell_1$ penalty used in LASSO shrinks coefficients toward zero, thus introducing bias into the estimators. While this promotes sparsity, it also complicates statistical inference for the selected model as the resulting sampling distribution is difficult to characterize. Algorithm 1 shows the algorithm proposed by \citet{javanmard14a} to construct the debiased estimator from the LASSO, given by Equation (\ref{eq:lasso_debiased}) as $\hat{\theta}^{u}=\hat{\theta}^{\,n}(\lambda)+\frac{1}{n}M\mathbf{X}^{\top}\bigl(Y-\mathbf{X}\hat{\theta}^{\,n}(\lambda)\bigr)$. The basic intuition is that the subgradient of the $\ell_1$ norm at the LASSO solution $\hat{\theta}^n$ is $\mathbf{X}^\top(Y - \mathbf{X}\hat{\theta}^n)/(n\lambda)$. By adding a term which is proportional to the subgradient, (\ref{eq:lasso_debiased}) compensates the bias introduced by the $\ell_1$ penalty in the LASSO.

\begin{algorithm}
\caption{Unbiased estimator for $\theta_0$ in high-dimensional linear regression models}
\begin{algorithmic}[1]
\Statex\textbf{Input:} Measurement vector $y$, design matrix $\mathbf{X}$, parameters $\lambda,\mu$.
\Statex\textbf{Output:} Unbiased estimator $\hat{\theta}^{u}$.
\State Let $\hat{\theta}^{\,n}=\hat{\theta}^{\,n}(y;\mathbf{X};\lambda)$ be the LASSO estimator as per (\ref{eq:lasso_eq}).
\State Set $\widehat{\Sigma}=(\mathbf{X}^{\top}\mathbf{X})/n$.
\For{$i=1,2,\dots,p$}
    \State Let $m_i$ be a solution of the convex program:
    \begin{equation}
    \begin{aligned}
    &\text{minimize} && m^\top \widehat{\Sigma} m \\
    &\text{subject to} && \|\widehat{\Sigma}m-e_i\|_\infty \le \mu,
    \end{aligned}
    \label{eq:convex_optimization}
    \end{equation}
    where $e_i\in\mathbb{R}^p$ is the vector with one at the $i$-th position and zero everywhere else.
\EndFor
\State Set $M=(m_1,\dots,m_p)^\top$. If any of the above problems is not feasible, then set $M=I_{p\times p}$.
\State Define the estimator $\hat{\theta}^{u}$ as follows:
\begin{equation}
    \hat{\theta}^{u}
=
\hat{\theta}^{\,n}(\lambda)
+\frac{1}{n}M\mathbf{X}^{\top}\bigl(Y-\mathbf{X}\hat{\theta}^{\,n}(\lambda)\bigr)
\label{eq:lasso_debiased}
\end{equation}
\end{algorithmic}
\end{algorithm}

It is claimed by \citet{javanmard14a} that the estimator (\ref{eq:lasso_debiased}) guarantees near optimal confidence interval sizes and testing power. Section \ref{sec:2.1} presents preliminary definitions necessary for understanding the core theory presented in \citet{javanmard14a}. Section \ref{sec:2.2} presents the theorems which demonstrate that $\hat{\theta}^{u}$ is Gaussian and is an asymptotically unbiased estimator for $\theta_0$. Section \ref{sec:2.3} shows that the results from theorems in section \ref{sec:2.2} can be applied directly to derive confidence intervals (section \ref{sec:2.3.1}), show near optimality of hypotheses tests generated (section \ref{sec:2.3.2}),along with being generalized to derivation of simultaneous confidence intervals (Section \ref{sec:2.3.3}) and the validity of the estimator $\hat{\theta}^u$ in non-Gaussian noise settings (Section \ref{sec:2.3.4}).


\subsection{Prerequisites}
\label{sec:2.1}
In this section, we will present basic definitions which will be used later in this paper.
\begin{definition}[True Support Set of $\theta_0$]
We denote by $S \equiv \text{supp}(\theta_0)$ the support of $\theta_0 \in \mathbb{R}^p$, defined as: 
\begin{equation}
\label{eq:supp_theta}
    \text{supp}(\theta_0) \equiv \left\{i \in [p]: \theta_{0,i} \neq 0 \right\}
\end{equation}
Where $[p] = \left\{1, \dots, p \right\}$ is the index set of all parameters. 
\end{definition}
\noindent
In words, $\text{supp}(\theta_0)$ is the set of non-zero sparse parameters in the model (\ref{eq:mod_1dim}). We further denote $s_0 \equiv |S|$ to be the number of elements in $\text{supp}(\theta_0)$.

\begin{definition}[Sub-Gaussian Norm]
The sub-Gaussian norm of a random variable $X$, denoted by $\|X\|_{\psi_2}$ is defined as
$$
\|X\|_{\psi_2} = \sup_{q \ge 1} q^{-1/2} (\mathbb{E}|X|^q)^{1/q}
$$
For a random vector $X \in \mathbb{R}^n$, its sub-Gaussian norm is defined as
$$
\|X\|_{\psi_2} = \sup_{x \in \mathbb{S}^{n-1}}\|\langle X, x\rangle\|_{\psi_2}
$$
Where $\mathbb{S}^{n-1}$ denotes the unit sphere in $\mathbb{R}^n$
\end{definition}
\noindent
The next two definitions provide the necessary foundation for the theorems in Section \ref{sec:2.2}, which establish the unbiasedness of $\hat{\theta}^{u}$.

\begin{definition}[Compatibility Constant]
Given a symmetric matrix $\hat{\Sigma} \in \mathbb{R}^{p\times p}$ and a set $S \subseteq [p]$, the corresponding compatibility constant is defined as
\begin{equation}
    \phi^2(\hat{\Sigma}, S) \equiv \underset{\theta\in \mathbb{R}^p}{\min} \left\{\frac{|S| \langle \theta, \hat{\Sigma}\theta \rangle}{\|\theta_S\|^2_1}: \theta \in \mathbb{R}^p, \|\theta_{S^c}\|_1 \le 3 \|\theta_S \|_1\right\}
\end{equation}
We say that $\hat{\Sigma} \in \mathbb{R}^{p\times p}$ satisfies the compatibility condition for $S \subseteq [p]$ with constant $\phi_0$ if $\phi(\hat{\Sigma}, S) \ge \phi_0$. We say it holds for the design matrix $\mathbf{X}$ if it holds for $\hat{\Sigma} = \mathbf{X}^\top\mathbf{X}/n$.
\end{definition}

\begin{definition}[Generalized Coherence Parameter]
Given the pair $\mathbf{X}\in \mathbb{R}^{n\times p}$ and $M \in \mathbb{R}^{p\times p}$. Let $\hat{\Sigma} = \mathbf{X}^\top\mathbf{X}/n$ denote the associated sample covariance. The generalized coherence parameter of $\mathbf{X}, M$ is: 
\begin{equation}
    \mu_*(\mathbf{X}; M) \equiv |M \hat{\Sigma} - I|_\infty = \max_{i\ge1, j\le p}|(M\hat{\Sigma})_{ij} - I_{ij}| = \max \left\{\max_{i}|(M\hat{\Sigma})_{ii} - 1|, \max_{i\neq j} |(M\hat{\Sigma})_{ij}|\right\}
\end{equation}
We denote the minimum (generalized) coherence of $\mathbf{X}$ as $\mu_{\min}(\mathbf{X}) = \min_{M\in \mathbb{R}^{p\times p}}\mu_*(\mathbf{X}; M)$ and any minimizer of $\mu_*(\mathbf{X}; M)$ as $M_{\text{min}}(X)$. 
\end{definition}
\noindent
Note that the minimum coherence can be computed efficiently since $M \mapsto \mu_*(\mathbf{X}; M)$ is a convex function.
\begin{remark}
If we assume that the columns of of $\mathbf{X}$ are normalized to have the $\ell_2$ norm equal to $\sqrt{n}$ (i.e., $\|\mathbf{X} e_i\| = \sqrt{n}$ for all $i \in [p]$), and $M=I$, then $(M\hat{\Sigma} - I)_{i,i}= 0$ and $|M\hat{\Sigma} - I|_\infty = \max_{i\neq j}|(\hat{\Sigma})_{i,j}|$. In other words, $\mu(\mathbf{X}; I)$ is the maximum normalized scalar product between distinct columns of $\mathbf{X}$. i.e.
\begin{equation}
    \mu_*(\mathbf{X}; I) = \frac{1}{n} \max_{i\neq j}| \langle \mathbf{X}e_i, \mathbf{X}e_j\rangle|
    \label{eq:normed_coherence}
\end{equation}
Then the quantity (\ref{eq:normed_coherence}) is known as the coherence parameter of the matrix $\mathbf{X}/\sqrt{n}$.
\end{remark}
\noindent
Assuming for the sake of simplicity that the columns of $\mathbf{X}$ are normalized so that $\|\mathbf{X}e_i\|_2 = \sqrt{n}$, a small value of the coherence parameter $\mu_*(\mathbf{X}; I)$ means that the columns of $\mathbf{X}$ are roughly orthogonal. So $\mu_*(\mathbf{X}; I) =0$ if and only if $\mathbf{X}/\sqrt{n}$ is column orthogonal. However, the generalized coherence parameter $\mu_*(\mathbf{X}; M)$ can be much smaller (i.e. it can remain small, potentially as small as order $O(\sqrt{(\log p)/n})$, even when the columns of $\mathbf{X}$ are significantly correlated). 

Armed with these definitions, we now present the central theorems that establish the asymptotic normality of the debiased estimator $\hat{\theta}^u$.

\subsection{A Debiased Estimator for $\theta_0$}
\label{sec:2.2}
Having established the generalized coherence $\mu_*$, we now characterize the  distributional properties of the debiased estimator. The following theorems  provide the conditions under which $\hat{\theta}^u$ is asymptotically normal, which will allow for classical inferential procedures such as confidence intervals and hypotheses testing. In order to clarify the distributional properties of $\hat{\theta}^u$, \citet{javanmard14a} consider a general debiasing procedure that makes use of an arbitrary debiasing matrix $M \in \mathbb{R}^{p\times p}$ and define the \textit{generalized debiased estimator} $\hat{\theta}^*$ as: 
\begin{equation}
    \hat{\theta}^*(Y, \mathbf{X}; M, \lambda) = \hat{\theta}^n(\lambda) + \frac{1}{n}M\mathbf{X}^\top (Y-\mathbf{X}\hat{\theta}^n(\lambda))
    \label{eq:gen_debiased_theta}
\end{equation}
\begin{theorem}[Error Decomposition around $\hat{\theta^*}$]
\label{thm:thm6}
Let $\mathbf{X} \in \mathbb{R}^{n\times p}$ be any (determinisitic) design matrix and $\hat{\theta}^* = \hat{\theta}^*(Y, \mathbf{X}; M,\lambda)$ be the generalized debiased estimator as per Equation (\ref{eq:gen_debiased_theta}). 
\begin{enumerate}
    \item Setting  $Z= M \mathbf{X}^\top W/\sqrt{n}$ gives
    \begin{equation}
        \sqrt{n}(\hat{\theta}^*- \theta_0) = Z + \Delta, \quad Z \sim N(0, \sigma^2 M \hat{\Sigma}M^\top), \quad \Delta = \sqrt{n}(M\hat{\Sigma} - I)(\theta_0 - \hat{\theta}^n)
    \end{equation}
    \item Further, assume that $\mathbb{X}$ satisfies the compatibility condition for the set $S = \text{supp}(\theta_0), |S| \le s_0$, with constant $\phi_0$ and has a generalized coherence parameter $\mu_* = \mu_*(\mathbf{X}; M)$ and let $K \equiv \max_{i \in [p]}\hat{\Sigma}_{i,i}$. Then letting $\lambda = \sigma \sqrt{(c^2 \log p)/n}$ gives
    \begin{equation}
    \label{eq:thm6eq2}
        \mathbb{P}\left(\|\Delta\|_\infty \ge \frac{4c \mu_* \sigma s_0}{\phi_0^2} \sqrt{\log p}\right) \le 2p^{-c_0}, \quad c_0 = \frac{c^2}{32K}-1
    \end{equation}
    Note, if $M= M_{min}(X)$ minimizes $\mu_*(\mathbf{X},M) = |M\hat{\Sigma} - I|_\infty$, then $\mu_*$ can be replaced by $\mu_\text{min}(X)$ in Equation (\ref{eq:thm6eq2}).
\end{enumerate}
\end{theorem}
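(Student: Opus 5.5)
Part 1 is pure algebra. We substitute $Y = \mathbf{X}\theta_0 + W$ into the definition (\ref{eq:gen_debiased_theta}) and use $\widehat{\Sigma}=\mathbf{X}^\top\mathbf{X}/n$:
\begin{equation*}
\hat{\theta}^* = \hat{\theta}^n + M\widehat{\Sigma}(\theta_0-\hat{\theta}^n) + \tfrac{1}{n}M\mathbf{X}^\top W .
\end{equation*}
Subtracting $\theta_0$ and multiplying by $\sqrt{n}$ gives
\begin{equation*}
\sqrt{n}(\hat{\theta}^*-\theta_0) = \sqrt{n}(M\widehat{\Sigma}-I)(\theta_0-\hat{\theta}^n) + M\mathbf{X}^\top W/\sqrt{n},
\end{equation*}
which is exactly $\Delta + Z$. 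Since $\mathbf{X}$ and $M$ are deterministic, $Z$ is a linear image of $W\sim N(0,\sigma^2 I_{n\times n})$. Hence $Z$ is Gaussian with mean zero and covariance $\sigma^2 M\mathbf{X}^\top\mathbf{X}M^\top/n=\sigma^2 M\widehat{\Sigma}M^\top$. Here I take $M$ to be fixed, or at least a function of $\mathbf{X}$ only, which is the case for the output of Algorithm 1.

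For part 2, I use the entrywise Hölder inequality on each coordinate of $\Delta$:
\begin{equation*}
\|\Delta\|_\infty \le \sqrt{n}\,|M\widehat{\Sigma}-I|_\infty\,\|\hat{\theta}^n-\theta_0\|_1 = \sqrt{n}\,\mu_*\,\|\hat{\theta}^n-\theta_0\|_1 .
\end{equation*}
This reduces the claim to the standard $\ell_1$ oracle inequality for the LASSO under the compatibility condition. The target is: on the event $\mathcal{E}=\{\|\mathbf{X}^\top W/n\|_\infty\le \lambda/2\}$,
\begin{equation*}
\|\hat{\theta}^n-\theta_0\|_1 \le \frac{4\lambda s_0}{\phi_0^2}.
\end{equation*}
With $\lambda=\sigma\sqrt{c^2\log p/n}$ this yields $\|\Delta\|_\infty \le 4c\mu_*\sigma s_0\sqrt{\log p}/\phi_0^2$, as stated.

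The oracle inequality I would derive via the usual basic-inequality argument (as in \citet{buhlmann2011statistics}). Optimality of $\hat{\theta}^n$ in (\ref{eq:lasso_eq}) compared with $\theta_0$ gives, with $v=\hat{\theta}^n-\theta_0$,
\begin{equation*}
\tfrac{1}{2}\langle v,\widehat{\Sigma}v\rangle \le \langle W,\mathbf{X}v\rangle/n + \lambda\bigl(\|\theta_0\|_1-\|\hat{\theta}^n\|_1\bigr).
\end{equation*}
On $\mathcal{E}$ the noise term is at most $\tfrac{\lambda}{2}\|v\|_1$, and the triangle inequality on $S$ and $S^c$ yields
\begin{equation*}
\langle v,\widehat{\Sigma}v\rangle + \lambda\|v_{S^c}\|_1 \le 3\lambda\|v_S\|_1 .
\end{equation*}
This places $v$ in the cone $\|v_{S^c}\|_1\le 3\|v_S\|_1$, so the compatibility condition gives $\|v_S\|_1^2\le s_0\langle v,\widehat{\Sigma}v\rangle/\phi_0^2$. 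Combining these with $ab\le a^2/4+b^2$ gives $\|v\|_1\le 4\lambda s_0/\phi_0^2$. The main obstacle I expect is tracking the constants exactly so that the factor $4$ comes out as stated: this requires the right choice of $\lambda$ versus the noise level, namely the threshold $\lambda/2$ in $\mathcal{E}$. If the constant does not come out cleanly, I would adjust the split of the quadratic term.

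Finally, I bound $\mathbb{P}(\mathcal{E}^c)$. Each coordinate $(\mathbf{X}^\top W/n)_j$ is $N(0,\sigma^2\widehat{\Sigma}_{jj}/n)$ with $\widehat{\Sigma}_{jj}\le K$. The Gaussian tail bound $\mathbb{P}(|N(0,\tau^2)|\ge t)\le 2e^{-t^2/(2\tau^2)}$ with $t=\lambda/2$ gives per-coordinate probability at most $2\exp(-c^2\log p/(8K))$. A union bound over $p$ coordinates then gives at most $2p^{1-c^2/(8K)}$, which is bounded by the stated $2p^{-c_0}$ with $c_0=c^2/(32K)-1$ (the stated constant is weaker, which leaves slack if the event is taken at a smaller threshold). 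The closing remark about $M_{\min}$ is immediate, since then $\mu_*=\mu_{\min}(\mathbf{X})$ by definition.
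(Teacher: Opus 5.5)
Your proposal is correct and follows the same route as the paper: substitute $Y=\mathbf{X}\theta_0+W$ for part 1, then use the Hölder bound $\|\Delta\|_\infty \le \sqrt{n}\,\mu_*\|\hat{\theta}^n-\theta_0\|_1$ together with the LASSO oracle inequality $\|\hat{\theta}^n-\theta_0\|_1\le 4\lambda s_0/\phi_0^2$ under the compatibility condition. The only difference is that you derive the oracle inequality and the Gaussian union bound inline instead of citing Theorem 6.1 and Lemma 6.2 of \citet{buhlmann2011statistics}; your constants check out, since the threshold $\lambda/2$ under the $1/(2n)$ normalization gives exactly the factor $4$ and a stronger exponent $c^2/(8K)-1\ge c_0$.
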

\noindent
Theorem \ref{thm:thm6} decomposes the estimation error of the generalized debiased estimator $(\hat{\theta}^*- \theta_0)$ into a zero mean Gaussian term $Z/\sqrt{n}$ and a bias term $\Delta/\sqrt{n}$ whose maximum entry is bounded in probability per Equation \ref{eq:thm6eq2}.
\begin{proof}
\leavevmode\par
\begin{enumerate}
\item By substituting $Y= X\theta_0 + W$ into Equation (\ref{eq:gen_debiased_theta}), we can show that the estimation error $(\hat{\theta}^* - \theta_0)$  can be decomposed into the zero mean Gaussian term $Z/\sqrt{n}$ and the bias term $\Delta/\sqrt{n}$. 
\begin{align*}
    \hat{\theta}^* &= \hat{\theta}^n + \frac{1}{n} M \mathbf{X}^\top(Y-X \hat{\theta}^n)\\
    &=\hat{\theta}^n + \frac{1}{n}M\mathbf{X}^\top(\mathbf{X} \theta_0 + W - \mathbf{X}\hat{\theta}^n)\\
    &= \hat{\theta}^n + \frac{1}{n}M\mathbf{X}^\top\mathbf{X}(\theta_0 - \hat{\theta}^n) + \frac{1}{n}M \mathbf{X}^\top W \\
    &= \hat{\theta}^n + \frac{1}{n}M\mathbf{X}^\top\mathbf{X}(\theta_0 -\hat{\theta}^n) + \frac{1}{n}M\mathbf{X}^\top W - (\theta_0 - \hat{\theta}^n) + (\theta_0 - \hat{\theta}^n) \\
    &=\theta_0 + \left(\frac{1}{n}M\mathbf{X}^\top \mathbf{X} - I\right)(\theta_0 - \hat{\theta}^n) + \frac{1}{n}M \mathbf{X}^\top W\\
    &= \theta_0 + (M\hat{\Sigma}-I)(\theta_0 - \hat{\theta}^n) + \frac{1}{\sqrt{n}}\cdot\frac{1}{\sqrt{n}}M\mathbf{X}^\top W\\
    &=\theta_0 + \frac{1}{\sqrt{n}}\Delta + \frac{1}{\sqrt{n}}Z \\
    &=\theta_0 + \frac{1}{\sqrt{n}}Z + \frac{1}{\sqrt{n}}\Delta\\
    \iff& (\hat{\theta}^* - \theta_0) = + \frac{1}{\sqrt{n}}Z + \frac{1}{\sqrt{n}}\Delta
\end{align*}
\item Noting that  
\begin{align*}
\|\Delta\|_\infty &= \|\sqrt{n}(M\hat{\Sigma} - I)(\theta_0 - \hat{\theta}^n)\|_\infty\\
(\substack{\text{Triangle}\\\text{Inequality}})&\le\sqrt{n}\|M\hat{\Sigma}-I\|_\infty \|\theta_0 - \hat{\theta}^n\|_1 \\
&= \sqrt{n}\mu_*(\mathbf{X}; M) \|\theta_0 - \hat{\theta}^n\|_1\\
&= \sqrt{n}\mu_*(\mathbf{X}; M) \| \hat{\theta}^n-\theta_0 \|_1
\end{align*}
By \citet[Theorem 6.1, Lemma 6.2]{buhlmann2011statistics},for any $\lambda \ge 4\sigma \sqrt{2K \log(p e^{t^2/2})/n}$ 
\begin{equation}
    \mathbb{P}\left(\|\hat{\theta}^n - \theta_0\|_1 \ge \frac{4 \lambda s_0}{\phi^2_0} \right) \le 2 e^{-t^2/2}
\end{equation}
So 
\begin{equation}
    \mathbb{P}\left(\|\Delta\|_\infty \ge \frac{4 \lambda \mu_* s_0 \sqrt{n}}{\phi^2_0}\right) \le 2 e^{-t^2/2}
\end{equation}
The claim follows by selecting $t$ such that $e^{t^2/2}=p^{-c_0}$. Setting $\lambda = \sigma \sqrt{(K^2\log p)/n}$ 
$$
\mathbb{P}\left(\|\Delta\|_\infty \ge \frac{4c s_0}{\phi_0^2} \sqrt{\log p}\right) \le 2 p^{-c_0}
$$
\end{enumerate}
\end{proof}
\noindent
The next theorem establishes that for a natural probabilistic model of the design matrix $\mathbf{X}$, the (constant) bound of the compatibility constant $\phi_0$ and the generalized coherence parameter $\mu_*(\mathbf{X}; M)$ can be bounded with probability converging rapidly to one as $n, p \to \infty$.
\begin{theorem}[Technicality]
\label{thm:thm7}
   Let $\Sigma \in \mathbb{R}^{p\times p}$ be such that $\sigma_{\text{min}}(\Sigma) \ge C_{\text{min}} >0$ and $\sigma_{\text{max}}\le C_{\text{max}}<\infty$ and $\max_{i \in [p]} \Sigma_{ii}\le1$. Assume $\mathbf{X}\Sigma^{-1/2}$ to have independent sub-Gaussian rows, with zero mean and sub-Gaussian norm $\|\Sigma^{-1/2} X_1\|_{\psi_2} = \kappa$, for some constant $\kappa \in (0, \infty)$.
   \begin{enumerate}[label=(\alph*)]
       \item For  $\phi_0$, $s_0$, $K\in \mathbb{R}_{>0}$, let $\mathcal{E}_n = \mathcal{E}_n(\phi_0, s_0, K)$ denote the event that the compatibility condition holds for $\hat{\Sigma} = (\mathbf{X}^\top\mathbf{X}/n)$ for all sets $S \subseteq[p]$, $|S| \le s_0$ with constant $\phi_0>0$ and that $\max_{i\in [p]}\Sigma_{i,i} \le K$. i.e.
       \begin{equation}
           \mathcal{E}_n(\phi_0, s_0, K) \equiv \left\{\mathbf{X}\in \mathbb{R}^{n\times p}: \min_{S: |S| \le s_0} \phi(\hat{\Sigma}, S) \ge \phi_0, \quad \max_{i\in [p]}\hat{\Sigma}_{i,i} \le K,\quad \hat{\Sigma}= (\mathbf{X}^{\top}\mathbf{X}/n)\right\}
       \end{equation}
       Then there exists $x_* \le 2000$ such that the following happens. If $n \ge \nu_0 s_0 \log(p/s_0)$, $\nu_0 \equiv 5 \times 10^4 c_*(C_{\text{max}}/C_{\text{min}})^2 \kappa^4$, $\phi_0 = \sqrt{C_{\text{min}}}/2$ and $K \ge 1 + 20 \kappa^2 \sqrt{(\log p)/n}$, then
       \begin{equation}
           \mathbb{P}(\mathbf{X} \in \mathcal{E}_n(\phi_0, s_0, K)) \ge 1 - 4e^{-c_1n}, \quad c_1 \equiv \frac{1}{4c_* \kappa^4}
       \end{equation}
       \item For $a >0$, let $\mathcal{G}_n= \mathcal{G}_n(a)$ be the event that the convex optimization problem in Algorithm 1 (i.e. (\ref{eq:convex_optimization})) is feasible for $\mu = a \sqrt{(\log p)/n}$. i.e.
       \begin{equation}
           \mathcal{G}_n(a) \equiv \left\{\mathbf{X}\in \mathbb{R}^{n\times p}: \mu_{\text{min}}(\mathbf{X}) < a \sqrt{\frac{\log p}n} \right\}
       \end{equation}
       Then for all $n \ge a^2 C_{\text{min}}\log p /(4e^2C_{\max}\kappa^4)$
       \begin{equation}
               \mathbb{P}(X\in \mathcal{G}_n(a)) \ge 1 - 2p^{-c_2}, \quad c_2 \equiv \frac{a^2 C_{\min}}{24 e^2 \kappa^4 C_{\max}}-2
       \end{equation}
   \end{enumerate}
\end{theorem}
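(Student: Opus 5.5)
Both parts are proved by exhibiting one concrete object and then applying a concentration result for sub-Gaussian designs. We treat (a) and (b) separately, since they rest on different tools.

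For part (a), the approach is to use restricted-eigenvalue transfer from population to sample covariance, in the style of Rudelson--Zhou. The key steps are as follows.
\begin{enumerate}
\item First reduce compatibility to a restricted eigenvalue statement. On the cone $\|\theta_{S^c}\|_1\le 3\|\theta_S\|_1$, Cauchy--Schwarz gives $\|\theta_S\|_1^2\le |S|\,\|\theta_S\|_2^2\le |S|\,\|\theta\|_2^2$. Hence it suffices to show
\[
\langle\theta,\hat\Sigma\theta\rangle\ge \tfrac{C_{\min}}{4}\|\theta\|_2^2
\]
uniformly over the cone, for all $|S|\le s_0$. This yields $\phi_0=\sqrt{C_{\min}}/2$.
\item Next use the fact that every vector in this cone lies in a convex hull of $O(s_0)$-sparse vectors, up to a constant factor (the Maurey/Rudelson--Zhou argument). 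This reduces the problem to controlling $\hat\Sigma$ on sparse vectors.
\item Then bound the deviation $\sup_{v}|v^\top(\hat\Sigma-\Sigma)v|$ over sparse unit vectors. We use an $\varepsilon$-net on each $s$-dimensional coordinate subspace together with Bernstein's inequality for the sub-exponential variables $\langle \Sigma^{-1/2}X_i,u\rangle^2$, whose $\psi_1$ norm is at most $2\kappa^2$. A union bound over $\binom{p}{s}\le (ep/s)^s$ supports is affordable precisely when $n\ge \nu_0 s_0\log(p/s_0)$. The factor $(C_{\max}/C_{\min})^2$ in $\nu_0$ is what makes the deviation at most $C_{\min}/2$, and this produces the $4e^{-c_1 n}$ tail.
\item Separately handle the diagonal bound. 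Each $\hat\Sigma_{ii}=\frac1n\sum_k X_{ki}^2$ is an average of sub-exponential variables with mean $\Sigma_{ii}\le 1$. Bernstein plus a union over $p$ coordinates gives $\hat\Sigma_{ii}\le 1+20\kappa^2\sqrt{(\log p)/n}$ with the stated probability.
\end{enumerate}
The main obstacle is step 2. Getting the constants ($x_*\le 2000$, $5\times10^4$) requires carefully tracking the cone-to-sparse reduction. I would import Rudelson--Zhou's transfer theorem essentially verbatim rather than re-derive it.

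For part (b), it suffices to exhibit one $M$ with $|M\hat\Sigma-I|_\infty< a\sqrt{(\log p)/n}$. The natural choice is $M=\Sigma^{-1}$. Write $\tilde X_k=\Sigma^{-1/2}X_k$, which are isotropic and sub-Gaussian. Then
\[
(\Sigma^{-1}\hat\Sigma-I)_{ij}=\frac1n\sum_{k=1}^n\Bigl(\langle \Sigma^{-1/2}e_i,\tilde X_k\rangle\langle\Sigma^{1/2}e_j,\tilde X_k\rangle-\delta_{ij}\Bigr),
\]
which is a mean of i.i.d.\ centered products of two sub-Gaussians.

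The $\psi_2$ norms of these two factors are bounded by $\kappa\|\Sigma^{-1/2}e_i\|_2\le \kappa C_{\min}^{-1/2}$ and $\kappa\|\Sigma^{1/2}e_j\|_2\le\kappa C_{\max}^{1/2}$. So each product is sub-exponential with $\psi_1$ norm at most about $2\kappa^2\sqrt{C_{\max}/C_{\min}}$.

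Bernstein's inequality then gives a tail
\[
2\exp\Bigl(-\frac{n\varepsilon^2}{6e^2\kappa^4 (C_{\max}/C_{\min})}\Bigr)
\]
at deviation $\varepsilon$, valid in the sub-Gaussian regime $\varepsilon\lesssim \kappa^2\sqrt{C_{\max}/C_{\min}}$. That regime condition is exactly the requirement $n\ge a^2C_{\min}\log p/(4e^2C_{\max}\kappa^4)$.

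Setting $\varepsilon=a\sqrt{(\log p)/n}$ and union-bounding over the $p^2$ entries gives failure probability $2p^{2-a^2C_{\min}/(24e^2\kappa^4C_{\max})}=2p^{-c_2}$. Since $\mu_{\min}\le\mu_*(\mathbf X;\Sigma^{-1})$, the event $\mathcal G_n(a)$ follows.

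This part is routine. The only care needed is matching the Bernstein constants to the stated $c_2$.
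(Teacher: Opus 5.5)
Your plan is the one in Javanmard and Montanari's appendix, to which the review defers. In (a), restricted eigenvalue implies compatibility via Cauchy--Schwarz, and the restricted eigenvalue bound is imported from Rudelson--Zhou's Theorem 6 with $k_0=3$, $\delta=1/2$; this is where $c_*\le 2000$ and $c_1=\delta^2/(c_*\kappa^4)$ come from. In (b), you take $M=\Sigma^{-1}$, apply Bernstein to the entries of $\Sigma^{-1}\hat{\Sigma}-I$, and union-bound over $p^2$ entries. Part (b) is correct as you have it, with one slip: your displayed Bernstein exponent should have $24e^2$ rather than $6e^2$, since squaring the $\psi_1$ norm $2\kappa^2\sqrt{C_{\max}/C_{\min}}$ brings in a factor $4$. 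Your final $c_2$ already uses $24e^2$.

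The genuine gap is step 4 of (a), and it cannot be closed as stated. At deviation $t=20\kappa^2\sqrt{(\log p)/n}$, Bernstein is in its quadratic regime. Each coordinate therefore fails with probability of order $\exp(-c\,nt^2/\kappa^4)=p^{-400c}$, and the union bound leaves a failure probability polynomial in $p$, not $2e^{-c_1 n}$. The hypothesis $n\ge\nu_0 s_0\log(p/s_0)$ only bounds $n$ from below, so $e^{-c_1 n}$ can be far smaller than any power of $p$. No sharper inequality helps either. Take $\Sigma=I$, Gaussian rows, $s_0=1$, a fixed $p\ge 2$ and $K=1+20\kappa^2\sqrt{(\log p)/n}$. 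Then $\mathbb{P}(\mathcal{E}_n^c)\ge\mathbb{P}(\hat{\Sigma}_{11}>K)\to 1-\Phi(10\kappa^2\sqrt{2\log p})>0$ by the central limit theorem, while $4e^{-c_1 n}\to 0$. So the bound in (a), inherited from the original, overclaims when $K$ is near its lower limit, and your ``with the stated probability'' cannot be justified.

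What you can actually prove is one of two things:
\begin{enumerate}
\item $\mathbb{P}(\mathcal{E}_n)\ge 1-2e^{-c_1 n}-2p^{-c}$ for some $c>0$, keeping that $K$.
\item An $e^{-\Omega(n)}$ bound for a constant $K$, such as the value $K=3/2$ used in Theorem 8. There the deviation is of constant size, and Bernstein plus the union bound gives $2p\,e^{-c'n/\kappa^4}$. This is exponentially small in $n$ once $n\ge C\kappa^4\log p$; Theorem 8 already assumes $n\ge 1600\kappa^4\log p$, which matches up to the absolute constant.
\end{enumerate}
The second version is all Theorem 8 needs.
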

\noindent
Theorem \ref{thm:thm7} shows for a natural probabilistic model of the design matrix $\mathbf{X}\in \mathbb{R}^{n\times p}$, both $\phi_0$ and $\mu_*(\mathbf{X}; M)$ can be bounded with probability that both converge rapidly to 1 as $n,p \to \infty$. Specifically, it guarantees that for sufficiently large $n$, the design matrix remains numerically stable enough to recover the true parameters while maintaining the low coherence required to filter out the shrinkage bias inherent in the LASSO. The proof for this theorem is omitted as it is a more nuanced technicality. For the proof, see the appendix of \citet{javanmard14a}.

Theorems \ref{thm:thm6} and \ref{thm:thm7} establish the necessary probabilistic guarantees for recovering asymptotically unbiased and Gaussian estimates of $\theta_0$. The following theorem builds on these results to characterize the limiting distribution of the debiased estimator $\hat{\theta}^u$ produced by Algorithm 1.

\begin{theorem}[Asymptotic Normality and Error Bounds for the Debiased LASSO Estimator $\hat{\theta}^u$] 
    \label{thm:thm8}
    Consider the linear model (\ref{eq:mod_1dim}) and the debiased LASSO estimator (\ref{eq:lasso_debiased}) defined in Algortihm 1 with $\mu = a \sqrt{(\log p)/n}$. Then setting $Z = M\mathbf{X}^\top W/\sqrt{n}$, we have 
    \begin{equation}
        \sqrt{n}(\hat{\theta}^u - \theta_0) = Z + \Delta, \quad Z|\mathbf{X} \sim N(0, \sigma^2M \hat{\Sigma}M^\top), \quad \Delta = \sqrt{n}(M\hat{\Sigma}- I)(\theta_0 - \hat{\theta}^n)
    \end{equation}
    Further, under the assumptions of Theorem \ref{thm:thm7} for $n \ge \max(\nu_0s_0 \log(p/s_0), \nu_1 \log p)$, $\nu_1= \max(1600 \kappa^4, a^2/(4e^2\kappa^4))$ and $\lambda = \sigma \sqrt{(c^2 \log p)/n}$ 
    \begin{equation}
    \label{eqn:tailbound}
    \mathbb{P}\left\{\|\Delta\|_\infty \ge \left(\frac{16ac \sigma}{C_{\min}} \right)\frac{s_0 \log p}{\sqrt{n}}\right\} \le 4e^{-c_1n} + 4p^{-\min(\tilde{c}_0, c_2)}
    \end{equation}
    Where $\tilde{c}_0 = (c^2/48)-1$, $c_1 = \frac{1}{4c_* \kappa^4}$ and $c_2 = \frac{a^2 C_{\min}}{24 e^2 \kappa^4 C_{\max}}$. 
    \noindent
    
    Finally the tail bound (\ref{eqn:tailbound}) holds for any choice $M$ that is a function of the design matrix and satistfies the feasibility condition of Equation (\ref{eq:convex_optimization}) in Algorithm 1, i.e. $|M\hat{\Sigma}- I|_\infty \le \mu$.
\end{theorem}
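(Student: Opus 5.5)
The plan is to reduce Theorem \ref{thm:thm8} to Theorems \ref{thm:thm6} and \ref{thm:thm7} by conditioning on $\mathbf{X}$. The decomposition is purely algebraic, so it comes first. Since $M$ is a function of $\mathbf{X}$ only and $W$ is independent of $\mathbf{X}$, part 1 of Theorem \ref{thm:thm6} applies verbatim conditional on $\mathbf{X}$. This gives $\sqrt{n}(\hat{\theta}^u-\theta_0)=Z+\Delta$, with $Z\mid\mathbf{X}\sim N(0,\sigma^2 M\hat{\Sigma}M^\top)$ because $Z$ is a fixed linear map of the Gaussian vector $W$.

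For the tail bound, I intersect the good events. Take $\mathcal{E}_n=\mathcal{E}_n(\phi_0,s_0,K)$ with $\phi_0=\sqrt{C_{\min}}/2$ and a concrete $K$, say $K=3/2$. The hypothesis $n\ge 1600\kappa^4\log p$ gives $20\kappa^2\sqrt{(\log p)/n}\le 1/2$, so $K\ge 1+20\kappa^2\sqrt{(\log p)/n}$ as Theorem \ref{thm:thm7}(a) requires. Also take $\mathcal{G}_n(a)$. By Theorem \ref{thm:thm7}, using $n\ge\nu_0 s_0\log(p/s_0)$ and $n\ge a^2\log p/(4e^2\kappa^4)$ (which dominates the threshold in part (b) since $C_{\min}\le C_{\max}$),
\[
\mathbb{P}(\mathcal{E}_n^c)\le 4e^{-c_1 n},\qquad \mathbb{P}(\mathcal{G}_n^c)\le 2p^{-c_2}.
\]
On $\mathcal{G}_n$ the program (\ref{eq:convex_optimization}) is feasible, so $\mu_*(\mathbf{X};M)\le\mu=a\sqrt{(\log p)/n}$. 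The same holds for any $M=M(\mathbf{X})$ satisfying $|M\hat{\Sigma}-I|_\infty\le\mu$, which handles the final sentence of the theorem. Only this feasibility inequality is used below, never optimality of $M$.

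Next I condition on $\mathbf{X}\in\mathcal{E}_n\cap\mathcal{G}_n$ and apply part 2 of Theorem \ref{thm:thm6} with this $\phi_0$, $K$ and $\mu_*\le\mu$. The intermediate bound is
\[
\|\Delta\|_\infty\le\sqrt{n}\,\mu\,\|\hat{\theta}^n-\theta_0\|_1\le \sqrt{n}\, a\sqrt{\tfrac{\log p}{n}}\cdot\frac{4\lambda s_0}{\phi_0^2}=\frac{16 a c\,\sigma}{C_{\min}}\frac{s_0\log p}{\sqrt{n}},
\]
using $\lambda=\sigma c\sqrt{(\log p)/n}$ and $\phi_0^2=C_{\min}/4$. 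The conditional failure probability is $2p^{-c_0}$ with $c_0=c^2/(32K)-1$. With $K=3/2$ this is exactly $\tilde c_0=c^2/48-1$, which is why $K=3/2$ is the natural choice. Unconditioning gives $\mathbb{P}(\|\Delta\|_\infty\ge\cdot)\le 2p^{-\tilde c_0}+2p^{-c_2}+4e^{-c_1n}\le 4e^{-c_1 n}+4p^{-\min(\tilde c_0,c_2)}$.

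The main obstacle is the LASSO $\ell_1$ oracle inequality invoked inside Theorem \ref{thm:thm6}. That result is stated for deterministic designs and needs $\lambda\ge 4\sigma\sqrt{2K\log(pe^{t^2/2})/n}$. I must check that the choice of $t$ with $e^{-t^2/2}=p^{-\tilde c_0}$ is compatible with $\lambda=c\sigma\sqrt{(\log p)/n}$ when $K=3/2$, i.e.\ that $c^2\ge 32K(1+\tilde c_0)$. This is where the constant $48$ arises. I also need to verify that conditioning on $\mathbf{X}$ is legitimate, which holds because $W\perp\mathbf{X}$ and the event $\mathcal{E}_n$ depends only on $\mathbf{X}$. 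I expect the remaining work to be bookkeeping of constants, e.g.\ the mismatch between $c_2$ with or without the ``$-2$''.
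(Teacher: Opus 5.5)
Your proposal is correct and follows the paper's proof essentially step for step. You split on $\mathcal{E}_n(\sqrt{C_{\min}}/2,s_0,3/2)\cap\mathcal{G}_n(a)$, bound the complements via Theorem 7 (with $K=3/2$ licensed by $n\ge 1600\kappa^4\log p$), and apply Theorem 6 conditionally on $\mathbf{X}$ with $\mu_*\le\mu$ to get $2p^{-\tilde c_0}$, $\tilde c_0=c^2/48-1$. The $c_2$ mismatch you flag is a typo in the theorem statement, not a defect of your approach: both your argument and the paper's actually deliver Theorem 7's $c_2$, which includes the $-2$.
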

Assuming $\sigma$, $C_{\min}$ of order 1, Theorem \ref{thm:thm8} establishes that for random designs, the  maximum size of the bias term $\Delta_i$ over $i \in [p]$ is 
\begin{equation}
\|\Delta\|_{\infty} = O\left(\frac{s_0\log p}{\sqrt{n}}\right)
\end{equation}
On the other hand, the `noise term' $Z_i$ is roughly of order $\sqrt{[M \hat{\Sigma}M^\top]_{ii}}$
\begin{remark}
    Theorem \ref{thm:thm8} only requires that the support size satisfies $s_0 = O(n/\log p)$. If we further assume $s_0 = o(\sqrt{n}/\log p)$ then we have $\|\Delta\|_\infty = o(1)$ with high probability. Hence $\hat{\theta}^u$ is an asymptotically unbiased estimator for $\theta_0$.
\end{remark}
\noindent
We will now present the proof for Theorem \ref{thm:thm8}:
\begin{proof}
    Denote $\|\Delta\|_0 \equiv \left(\frac{16ac \sigma}{C_{\min}} \right)\frac{s_0 \log p}{\sqrt{n}}$. We can then show that 
    $$
\begin{aligned}
    \mathbb{P} \left\{ \|\Delta\|_\infty \ge \left( \frac{16ac \sigma}{C_{\min}} \right) \frac{s_0 \log p}{\sqrt{n}} \right\} &= \mathbb{P} \left\{ \|\Delta\|_\infty \ge \Delta_0 \right\} \\
    (*) \quad \le &\mathbb{P} \Big( \left\{ \|\Delta\|_\infty \ge \Delta_0 \right\} \cap \mathcal{E}_n(\sqrt{C_{\min}}/2, s_0, 3/2) \cap \mathcal{G}_n(a) \Big) \\
    &+ \mathbb{P} \left( \mathcal{E}_n^c(\sqrt{C_{\min}}/2, s_0, 3/2) \right) + \mathbb{P}(\mathcal{G}_n^c(a)) \\
    (**) \quad \le &\mathbb{P} \Big( \left\{ \|\Delta\|_\infty \ge \Delta_0 \right\} \cap \mathcal{E}_n(\sqrt{C_{\min}}/2, s_0, 3/2) \cap \mathcal{G}_n(a) \Big)\\ &+4e^{-c_1n} + 2p^{-c_2}
\end{aligned}
    $$
    Where the first equation $(*)$ comes from the law of total probability and $(**)$ comes from the probabilities of the compliments stated in Theorem \ref{thm:thm7}. We note further that the bound from Theorem \ref{thm:thm7} (b) (i.e. $\mathbb{P}(\mathbf{X} \in \mathcal{E}_n(\phi_0, s_0, K)) \ge 1 - 4e^{-c_1n}, \quad c_1 \equiv \frac{1}{4c_* \kappa^4}$) can be applied for $K= 3/2$ since under present assumptions $K \ge 1 + 20\kappa^2 \sqrt{(\log p)/n} \iff 3/2 \ge 1 + 20\kappa^2 \sqrt{(\log p)/n} \iff  1/2 \ge 20\kappa^2 \sqrt{(\log p)/n}$. 
$$
\begin{aligned}
    \mathbb{P} \Big( \left\{ \|\Delta\|_\infty \ge \Delta_0 \right\} \cap \mathcal{E}_n(\sqrt{C_{\min}}/2, s_0, 3/2) \cap \mathcal{G}_n(a) \Big)     &\le \sup_{\mathbf{X} \in \mathcal{E}_n(\sqrt{C_{\min}}/2, s_0, 3/2) \cap \mathcal{G}_n(a)} \mathbb{P} \left( \|\Delta\|_\infty \ge \Delta_0 \mid \mathbf{X} \right) \\
    &\le 2p^{-\tilde{c}_0}
\end{aligned}
 $$
    Where the last inequality follows from Theorem 6. Thus 
    $$
    \mathbb{P}\left\{\|\Delta\|_\infty \ge \left(\frac{16ac \sigma}{C_{\min}} \right)\frac{s_0 \log p}{\sqrt{n}}\right\} \le 4e^{-c_1n} +2p^{-c_2}+ 2p^{-\tilde{c}_0} \le 4e^{-c_1n} + 4p^{-\min(\tilde{c}_0, c_2)}
    $$
    Where the last inequality follows by bounding the sum by the dominant rate
\end{proof}
\noindent
Theorem \ref{thm:thm8} shows that asymptotic normality and error bounds of the generalized debiased estimator $\hat{\theta}^*$ can be applied to the debiased estimtor proposed by \citet{javanmard14a} in Algorithm 1 $\hat{\theta}^u$ (See Equation (\ref{eq:gen_debiased_theta})). 

\subsection{Statistical Inference}
\label{sec:2.3}
A direct application of Theorem \ref{thm:thm8} is to derive confidence intervals and statistical hypothesis tests for high dimensional models. The next Lemma will show how this is possible. Note, the sparisty assumption is assumed to be $s_0 = o(\sqrt{n}/\log p)$. 
\begin{lemma}[Asymptotic Normality and Valid Inference for the Debiased LASSO]
\label{lemma:lemma13}
    Consider a sequence of design matrices $\mathbf{X} \in \mathbb{R}^{n\times p}$ with dimension $n \to \infty$, $p = p(n)\to \infty$ satisfying the following assumptions for the constants $C_{\min}$,$C_{\max}$, $\kappa \in (0, \infty)$ independent of $n$. For each $n, \Sigma \in \mathbb{R}^{p\times p}$ is such that $\sigma_{\min}(\Sigma) \ge C_{\min} >0$ and $\sigma_{\max}(\Sigma) \le C_{\max} < \infty$ and $\max_{i\in [p]}\Sigma_{ii} \le 1$. Assume $\mathbf{X} \Sigma^{-1/2}$ to have independent sub-Gaussian rows with zero mean and sub-Gaussian norm $\|\Sigma^{-1/2}X_1\|_{\psi_2} \le \kappa$.
    \\
    \\
    Consider the linear model (\ref{eq:mod_1dim}) and let $\hat{\theta}^u$ be the debaised LASSO estimate (\ref{eq:lasso_debiased}) derived in Algorithm 1 with $\mu = a \sqrt{(\log p)/n}$ and $\lambda = \sigma \sqrt{(c^2 \log p)/n}$ with $a$ and $c$ being large enough constants. Finally, let $\hat{\sigma} = \hat{\sigma}(y,\mathbf{X})$ be an estimator of the noise level satisfying for any $\varepsilon>0$
    \begin{equation}
        \lim_{n \to \infty} \sup_{\theta_0 \in \mathbb{R}^p; \|\theta_0\|_0 \le s_0}\mathbb{P}\left(\left|\frac{\hat{\sigma}}{\sigma} - 1\right| \ge \varepsilon \right) = 0
    \end{equation}
    If $s_0 = o(\sqrt{n}/\log p)$ $(s_0 \ge 1)$, then for all $x \in \mathbb{R}$, we have
    \begin{equation}
        \lim_{n \to \infty} \sup_{\theta_0 \in \mathbb{R}^p; \|\theta_0\|_0 \le s_0}\left|\mathbb{P}\left(\frac{\sqrt{n}(\hat{\theta}^u_i - \theta_{0,i})}{\hat{\sigma} [M \hat{\Sigma}M^{\top}]_{i,i}^{1/2}}  \le x \right)- \Phi(x)\right| = 0
    \end{equation}
    Where $\Phi(\cdot)$ denotes the cumulative distribution function (CDF) of a standard normal random variable.
\end{lemma}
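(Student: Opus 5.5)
The plan is to start from the decomposition of Theorem \ref{thm:thm8}. For each coordinate $i$ it gives
\[
\frac{\sqrt{n}(\hat{\theta}^u_i - \theta_{0,i})}{\hat{\sigma}[M\hat{\Sigma}M^\top]_{i,i}^{1/2}} = \frac{\sigma}{\hat{\sigma}}\left(\frac{Z_i}{\sigma[M\hat{\Sigma}M^\top]_{i,i}^{1/2}} + \frac{\Delta_i}{\sigma[M\hat{\Sigma}M^\top]_{i,i}^{1/2}}\right).
\]
Conditional on $\mathbf{X}$, the first term inside the parentheses is exactly $N(0,1)$, since $Z\mid\mathbf{X}\sim N(0,\sigma^2 M\hat{\Sigma}M^\top)$ and $W$ is independent of $\mathbf{X}$. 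It is therefore $N(0,1)$ unconditionally, whatever the value of $\theta_0$. The proof then reduces to three claims, each uniform over $\|\theta_0\|_0\le s_0$: the second term is $o_P(1)$, the factor $\sigma/\hat{\sigma}$ tends to $1$ in probability, and a Slutsky-type argument finishes.

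\textbf{Step 1: the bias term.} By Theorem \ref{thm:thm8}, with probability at least $1-4e^{-c_1 n}-4p^{-\min(\tilde c_0,c_2)}$ we have
\[
\|\Delta\|_\infty \le (16ac\sigma/C_{\min})\, s_0\log p/\sqrt{n}.
\]
This bound does not depend on $\theta_0$ beyond its sparsity. Take $a,c$ large enough that $\tilde c_0,c_2>0$. Then the failure probability tends to $0$ because $p\to\infty$, and the bound itself is $o(1)$ because $s_0=o(\sqrt{n}/\log p)$. The condition $n\ge\max(\nu_0 s_0\log(p/s_0),\nu_1\log p)$ also holds eventually under this sparsity assumption. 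Hence $\sup_{\theta_0}\mathbb{P}(\|\Delta\|_\infty\ge\varepsilon)\to0$ for every $\varepsilon>0$.

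\textbf{Step 2: lower bound on the variance (the main obstacle).} To pass from $\Delta_i=o_P(1)$ to the normalized term being $o_P(1)$, I need $[M\hat{\Sigma}M^\top]_{i,i}$ bounded away from zero with high probability. I would get this from the constraint $\|\hat{\Sigma}m_i-e_i\|_\infty\le\mu$, which gives $\langle \hat{\Sigma} m_i, e_i\rangle\ge 1-\mu$. Combined with Cauchy–Schwarz in the $\hat{\Sigma}$ inner product,
\[
(1-\mu)^2 \le (e_i^\top\hat{\Sigma} m_i)^2 \le (m_i^\top\hat{\Sigma}m_i)\,\hat{\Sigma}_{ii},
\]
so $[M\hat{\Sigma}M^\top]_{i,i}\ge(1-\mu)^2/\hat{\Sigma}_{ii}$. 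On the event $\mathcal{E}_n(\cdot,\cdot,3/2)$ of Theorem \ref{thm:thm7} we have $\hat{\Sigma}_{ii}\le 3/2$. Since $\mu=a\sqrt{(\log p)/n}\to0$, this gives $[M\hat{\Sigma}M^\top]_{i,i}\ge 1/2$ eventually, with probability at least $1-4e^{-c_1 n}$. On the infeasible event the algorithm sets $M=I$; there $[M\hat{\Sigma}M^\top]_{i,i}=\hat{\Sigma}_{ii}$, but this event has vanishing probability by Theorem \ref{thm:thm7}(b), so it can simply be absorbed into the exceptional set. Together with Step 1, the normalized bias term tends to $0$ in probability, uniformly in $\theta_0$.

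\textbf{Step 3: conclusion.} Write $T=(\sigma/\hat\sigma)(G+R)$, where $G\sim N(0,1)$ exactly and $\sup_{\theta_0}\mathbb{P}(|R|>\varepsilon)\to0$. By the hypothesis on $\hat\sigma$, $\sup_{\theta_0}\mathbb{P}(|\sigma/\hat\sigma-1|>\varepsilon)\to0$. For fixed $x$ and small $\varepsilon$, intersect with the good events. This sandwiches $\mathbb{P}(T\le x)$ between $\mathbb{P}(G\le x(1\pm\varepsilon)\pm\varepsilon)$ up to an error $\eta_n(\varepsilon)$, where $\eta_n(\varepsilon)\to0$ uniformly in $\theta_0$. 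Letting $n\to\infty$ and then $\varepsilon\to0$, continuity of $\Phi$ gives
\[
\sup_{\theta_0}\bigl|\mathbb{P}(T\le x)-\Phi(x)\bigr|\to0.
\]
Uniformity in $\theta_0$ holds throughout because every probability bound used depends on $\theta_0$ only through $s_0$, and the law of $G$ does not depend on $\theta_0$ at all.
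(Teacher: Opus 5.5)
Your proposal is correct and follows the same route as the paper's proof: the decomposition of Theorem~\ref{thm:thm8}, the exact $N(0,1)$ law of the normalized noise term conditionally on $\mathbf{X}$, the vanishing of the bias term uniformly over $\|\theta_0\|_0\le s_0$, and a two-sided Slutsky-type sandwich using the consistency of $\hat{\sigma}$. Your Step 2 bound $[M\hat{\Sigma}M^\top]_{i,i}\ge(1-\mu)^2/\hat{\Sigma}_{i,i}$ fills a step the paper's proof leaves implicit when it says the normalized bias $\Delta_i/(\hat{\sigma}[M\hat{\Sigma}M^\top]_{i,i}^{1/2})$ vanishes ``by Theorem~\ref{thm:thm8}'', since a bound on $\|\Delta\|_\infty$ alone does not control that ratio, so your version is the more complete of the two.
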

While the standard LASSO is an excellent tool for variable selection, its inherent bias makes it unsuitable for traditional statistical inference. Lemma \ref{lemma:lemma13} provides the necessary theoretical bridge, transforming $\hat{\theta}^u$ into a standardized statistic that converges to a standard normal distribution. This in turn shows that it is permissible to perform 'OLS-style' inference in settings where $p \gg n$
\begin{proof}
    Under stated the stated assumptions in Lemma \ref{lemma:lemma13},we can prove 
    \begin{equation}
    \label{eq:lemma13proof1}
        \lim \sup_{n \to \infty}\sup_{\|\theta_0\|_0 \le s_0}\mathbb{P}\left(\frac{\sqrt{n}(\hat{\theta}^u_i - \theta_{0,i})}{\hat{\sigma} [M \hat{\Sigma}M^{\top}]_{i,i}^{1/2}}  \le x \right) \le \Phi(x)
    \end{equation}
    By Theorem \ref{thm:thm8}, we have
    $$
    \frac{\sqrt{n}(\hat{\theta}_i^u - \theta_{0, i})}{\sigma [M \hat{\Sigma} M^\top]^{1/2}_{i,i}} = \frac{e^{\top}_i M \mathbf{X}^\top W}{\sigma [M \hat{\Sigma} M^\top]^{1/2}_{i,i}} + \frac{\Delta_i}{\sigma [M \hat{\Sigma} M^\top]^{1/2}_{i,i}}
    $$
    Let $V = \mathbf{X}M^\top e_i / (\sigma [M \hat{\Sigma} M^\top]_{ii}^{1/2})$ and $\tilde{Z} \equiv V^\top W$. Conditioned on $\mathbf{X}$ (and thus $V$), the independence of the noise $W \sim N(0, \sigma^2 I_n)$ implies:
    $$
    \mathbb{P}( \tilde{Z} \le x) = \mathbb{E}\left\{\mathbb{P}(V^\top W\le x|V) \right\} = \mathbb{E}\left\{\Phi(x) | V \right\} = \Phi(x)
    $$
    In order to prove Equation (\ref{eq:lemma13proof1}), fix $\varepsilon >0$ and write
    $$
    \begin{aligned}
    \mathbb{P}\left(\frac{\sqrt{n}(\hat{\theta}^u_i - \theta_{0,i})}{\hat{\sigma} [M \hat{\Sigma}M^{\top}]_{i,i}^{1/2}}  \le x \right) &= \mathbb{P}\left(\frac{\sigma}{\hat{\sigma}}\tilde{Z} +  \frac{\Delta_i}{\hat{\sigma}[M \hat{\Sigma}M^\top]^{1/2}_{i,i}}  \le x \right)\\
    &\le \mathbb{P}\left(\frac{\sigma}{\hat{\sigma}}\tilde{Z} \le x + \varepsilon \right) + \mathbb{P}\left( \frac{\Delta_i}{\hat{\sigma}[M \hat{\Sigma}M^\top]^{1/2}_{i,i}}  \ge \varepsilon\right)\\
    &\le \mathbb{P}\left(\tilde{Z} \le x + 2\varepsilon + \varepsilon|x| \right) + \mathbb{P}\left( \frac{\Delta_i}{\hat{\sigma}[M \hat{\Sigma}M^\top]^{1/2}_{i,i}}  \ge \varepsilon\right) + \mathbb{P}\left(\left|\frac{\hat{\sigma}}{\sigma} - 1\right| \ge \varepsilon \right)
    \end{aligned}
    $$
    Taking the limit and using assuming that $\lim_{n \to \infty} \sup_{\theta_0 \in \mathbb{R}^p; \|\theta_0\|_0 \le s_0}\mathbb{P}\left(\left|\frac{\hat{\sigma}}{\sigma} - 1\right| \ge \varepsilon \right) = 0$ we obtain:
    $$
          \lim_{n \to \infty} \sup_{\theta_0 \in \mathbb{R}^p; \|\theta_0\|_0 \le s_0}\mathbb{P}\left(\frac{\sqrt{n}(\hat{\theta}^u_i - \theta_{0,i})}{\hat{\sigma} [M \hat{\Sigma}M^{\top}]_{i,i}^{1/2}}  \le x \right) \le \Phi(x + 2\varepsilon + \varepsilon |x| ) + \lim_{n \to \infty} \sup_{\theta_0 \in \mathbb{R}^p; \|\theta_0\|_0  \le s_0} \mathbb{P}\left( \frac{\Delta_i}{\hat{\sigma}[M \hat{\Sigma}M^\top]^{1/2}_{i,i}}  \ge \varepsilon\right)
    $$
Since $\varepsilon > 0$ is arbitrary, showing that the limit vanishes requires bounding the absolute difference from both sides. By the properties of probability measures for any $\varepsilon > 0$, we have:
\begin{equation}
    \left| \mathbb{P}\left( \frac{\sqrt{n}(\hat{\theta}^u_i - \theta_{0,i})}{\hat{\sigma} [M \hat{\Sigma}M^{\top}]_{i,i}^{1/2}} \le x \right) - \Phi(x) \right| \le \mathbb{P}\left( \frac{|\Delta_i|}{\hat{\sigma}[M \hat{\Sigma}M^\top]^{1/2}_{i,i}} \ge \varepsilon \right) + \delta_\varepsilon(x)
\end{equation}
where $\delta_\varepsilon(x)$ represents the modulus of continuity of $\Phi(\cdot)$, which vanishes as $\varepsilon \to 0$. By Theorem \ref{thm:thm8}, the first term on the right-hand side converges to zero as $n \to \infty$. Since this holds for any $\varepsilon$, the supremum of the absolute difference over the sparsity class $s_0$ must likewise converge to zero.
\end{proof}
Lemma \ref{lemma:lemma13} is the bridge which facilitates standard inferential techniques such as confidence intervals and hypothesis testing for $\hat{\theta}^u$. Furthermore, this result is generalizable to simultaneous inference and remains robust under non-Gaussian noise distributions. These results are presented in subsections \ref{sec:2.3.1},\ref{sec:2.3.2},\ref{sec:2.3.3} and \ref{sec:2.3.4} without proof for brevity.
\subsubsection{Confidence Intervals}
\label{sec:2.3.1}
From Lemma \ref{lemma:lemma13} it is quite straightforward to construct asymptotically valid confidence intervals. For $i \in [p]$ and significance level $\alpha \in (0,1)$, the confidence interval of the point estimate of the debaised LASSO estimator $\hat{\theta}^u$ is defined as:
\begin{equation}
\label{eq:ci_formula}
\begin{aligned}
    J_i(\alpha) &\equiv \left[ \hat{\theta}_i^u - \delta(\alpha, n), \hat{\theta}_i^u + \delta(\alpha, n) \right], \\
    \delta(\alpha, n) &\equiv \Phi^{-1}(1 - \alpha/2) \frac{\hat{\sigma}}{\sqrt{n}} [M \hat{\Sigma} M^\top]_{i,i}^{1/2}
\end{aligned}
\end{equation}
It can also be shown that the confidence interval is asymptotically valid, namely 
\begin{equation}
    \lim_{n \to \infty} \mathbb{P}\left(\theta_{0,i} \in J_i(\alpha) \right) = 1-\alpha
\end{equation}
\subsubsection{Hypothesis Testing}
\label{sec:2.3.2}
An important advantage of sparse linear regression methods such as the LASSO is their ability to provide a parsimonious explanation of the data, relying on only a small number of covariates. This can be simply understood as selecting the number of non-zero covariates $\hat{\theta}^n_i \neq 0$ for all $i \subseteq [p]$. However, the covariates estimated by the LASSO $\hat{\theta}^n_i$ do not provide a measure of the statistical significance for the finding that the coefficient is non-zero. In particular, we are interested in testing an individual null hypotheses $H_{0,i}: \theta_{0,i} = 0$ verses the alternative $H_{A,i}: \theta_{\theta,i} \neq 0$. Using the debiased estimator $\hat{\theta}^u$ derived in Algorithm 1, we construct the p-value $P_i$ for the test $H_{0,i}$ to be: 
\begin{equation}
    P_{i} = 2 \left(1 - \Phi\left(\frac{\sqrt{n}|\hat{\theta}^{u}_i |}{\hat{\sigma}[M \hat{\Sigma M^\top}]^{1/2}_{i,i}}\right) \right)
\end{equation}
The decision rule is based on the p-value $P_i$:
\begin{equation}
    \widehat{T}_{i, \mathbf{X}}(y) = \begin{cases}
        1 & \text{if } P_i \le \alpha \quad (\text{reject } H_{0,i}), \\
        0 & \text{otherwise} \quad (\text{accept } H_{0,i})
    \end{cases}
\end{equation}
Where $\alpha$ is the fixed target Type I error probability. We measure the quality of the test $ \widehat{T}_{i, \mathbf{X}}(y)$ in terms of its significance level $\alpha_i$ and statistical power $1- \beta_i$ where $\beta_i$ is the probability of type II error. It can be shown that this test achieves power $1 - \beta =\alpha$. From a formal minimax perspective, given a family of tests $T_{i, \mathbf{X}}:\mathbb{R}^n \to \{ 0, 1\}$ indexed by $i \in [p]$, $\mathbf{X} \in \mathbb{R}^{n\times p}$, we defined for $\gamma >0$ a lower bound on the non-zero entries of $\theta_{0,i}$
\begin{equation}
    \begin{aligned}
        \alpha_{i, n}(T) &\equiv \sup \left\{\mathbb{P}_{\theta_0}(T_{i,\mathbf{X}}(y)= 1):\theta_0 \in \mathbb{R}^p, \|\theta_0\|_0 \le s_0(n), \theta_{0,i} = 0 \right\}\\
        \beta_{i, n}(T; \gamma) & \equiv\sup \left\{\mathbb{P}_{\theta_0}(T_{i,\mathbf{X}}(y)= 0):\theta_0 \in \mathbb{R}^p, \|\theta_0\|_0 \le s_0(n), \theta_{0,i} \ge \gamma \right\}
    \end{aligned}
\end{equation}
\subsubsection{Generalizing to Simultaneous Confidence Intervals}
\label{sec:2.3.3}
In many situations, it is necessary to perform statistical inference on more than one of the parameters simultaneously. As an example, we might be interested in performing inference about $\theta_{0,R}\equiv (\theta_{0,i})_{i \in R}$ for some set $R \subseteq [p]$. The simplest generalization of the method method is when the number of indices $|R|$ stay finite as $n, p \to \infty$. The following Lemma is a generalization of Lemma \ref{lemma:lemma13}.
\begin{lemma}
\label{lemma:lemma19}
Under the assumptions of Lemma \ref{lemma:lemma13}, define 
\begin{equation}
Q^{(n)} \equiv \frac{\hat{\sigma}^2}{n}[M \hat{\Sigma}M^\top]
\end{equation}
Let $R=R(n)$ be a sequence of sets $R(n) \subseteq [p],$ with $|R(n)| = k$ fixed as $n, p \to \infty$ and further assume $s_0 = o(\sqrt{n}/\log p)$ with $s_0\ge1$. Then for all $x = (x_1, \dots, x_k) \in \mathbb{R}^k$, we have
\begin{equation}
    \lim_{n \to \infty}\sup_{\theta_0 \in \mathbb{R}^p; \|\theta_0\|_0 \le s_0}\left|\mathbb{P}\left\{\left(Q^{(n)}_{R,R}\right)^{-1/2} (\hat{\theta}^u_R - \theta_{0, R}) \le x \right\}  -\mathbf{\Phi}_k(x)\right| =0
\end{equation}
where $(a_1, \dots, a_k) \le (b_1, \dots, b_k)$ indicates $a_1 \le b_1, \dots, a_k \le b_k$, and $\mathbf{\Phi}_k(x) = \Phi(x_1) \times \dots \times\Phi(x_k)$
\end{lemma}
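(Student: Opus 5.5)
The plan is to follow the proof of Lemma \ref{lemma:lemma13}, now working with vectors in $\mathbb{R}^k$ instead of scalars. Start from the decomposition in Theorem \ref{thm:thm8}, $\sqrt{n}(\hat{\theta}^u - \theta_0) = Z + \Delta$. Write $B \equiv [M\hat{\Sigma}M^\top]_{R,R} \in \mathbb{R}^{k\times k}$, so that $Q^{(n)}_{R,R} = \hat{\sigma}^2 B/n$. Then
\begin{equation*}
\left(Q^{(n)}_{R,R}\right)^{-1/2}(\hat{\theta}^u_R - \theta_{0,R}) = \frac{\sigma}{\hat{\sigma}}\,\tilde{Z} + \frac{1}{\hat{\sigma}}B^{-1/2}\Delta_R, \qquad \tilde{Z} \equiv \frac{1}{\sigma}B^{-1/2}Z_R .
\end{equation*}
Conditionally on $\mathbf{X}$, $Z_R \sim N(0,\sigma^2 B)$, so $\tilde{Z}\mid\mathbf{X} \sim N(0, I_k)$. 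Since this conditional law does not depend on $\mathbf{X}$, $\tilde{Z} \sim N(0,I_k)$ unconditionally. Hence $\mathbb{P}(\tilde{Z}\le x) = \mathbf{\Phi}_k(x)$ exactly, for every $\theta_0$.

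The main obstacle is the bias term. In Lemma \ref{lemma:lemma13} it was enough to bound the scalar $[M\hat{\Sigma}M^\top]_{ii}$ from below. Here we need a lower bound on $\sigma_{\min}(B)$, because $\|B^{-1/2}\Delta_R\|_2 \le \sigma_{\min}(B)^{-1/2}\sqrt{k}\,\|\Delta\|_\infty$.

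\emph{Eigenvalue bound.} I would obtain it from the feasibility constraint $|M\hat{\Sigma}-I|_\infty\le\mu$ via Cauchy--Schwarz in the $\hat{\Sigma}$ inner product. Fix a unit vector $v\in\mathbb{R}^k$ and set $u = M_{R,\cdot}^\top v$ and $w = \sum_{j\in R} v_j e_j$. Then
\begin{equation*}
v^\top B v = u^\top \hat{\Sigma} u \ge \frac{(u^\top\hat{\Sigma}w)^2}{w^\top\hat{\Sigma}w}.
\end{equation*}
For the numerator, $u^\top\hat{\Sigma}w = v^\top (M\hat{\Sigma})_{R,R}v \ge 1 - k\mu$, because the entries of $(M\hat{\Sigma})_{R,R}-I_k$ are bounded by $\mu$ and $\|v\|_1^2\le k$. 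For the denominator, $w^\top\hat{\Sigma}w \le \|v\|_1^2\max_i\hat{\Sigma}_{ii}\le kK$. On the event $\mathcal{E}_n(\sqrt{C_{\min}}/2,s_0,3/2)\cap\mathcal{G}_n(a)$ from Theorem \ref{thm:thm7} we may take $K=3/2$. With $\mu = a\sqrt{(\log p)/n}\to 0$ and $k$ fixed, this gives $\sigma_{\min}(B)\ge (1-k\mu)^2/(kK) \ge 1/(2k)$ for $n$ large. The event has probability tending to one, uniformly in $\theta_0$, because it depends only on $\mathbf{X}$.

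\emph{Bias term vanishes.} By Theorem \ref{thm:thm8}, with $a,c$ large enough that the exponents $\tilde{c}_0,c_2$ are positive, $\|\Delta\|_\infty = O(s_0\log p/\sqrt{n})$ with probability $\ge 1-4e^{-c_1 n}-4p^{-\min(\tilde c_0,c_2)}$. This bound does not depend on $\theta_0$ within the class $\|\theta_0\|_0\le s_0$. Since $s_0=o(\sqrt{n}/\log p)$, we get $\sup_{\theta_0}\mathbb{P}(\|B^{-1/2}\Delta_R\|_\infty \ge \varepsilon)\to 0$ for every $\varepsilon>0$. The same holds with the extra factor $1/\hat{\sigma}$, using the assumed consistency of $\hat{\sigma}/\sigma$.

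\emph{Sandwich argument.} Finish exactly as in Lemma \ref{lemma:lemma13}, applied coordinatewise to the orthant events. Fix $\varepsilon>0$. On the event where $|\hat{\sigma}/\sigma-1|<\varepsilon$ and the bias term has sup-norm below $\varepsilon$, the event $\{\text{statistic}\le x\}$ is contained in $\{\tilde{Z}\le x+\varepsilon(2+|x|)\mathbf{1}\}$, where $|x|$ is taken componentwise. It also contains $\{\tilde{Z}\le x-\varepsilon(2+|x|)\mathbf{1}\}$, up to the complement of that good event. This gives
\begin{equation*}
\left|\mathbb{P}(\cdot\le x)-\mathbf{\Phi}_k(x)\right| \le \sup\left|\mathbf{\Phi}_k\big(x\pm\varepsilon(2+|x|)\mathbf{1}\big)-\mathbf{\Phi}_k(x)\right| + o(1),
\end{equation*}
with the $o(1)$ term uniform over $\|\theta_0\|_0\le s_0$. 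Continuity of $\mathbf{\Phi}_k$ (it is Lipschitz in each coordinate) and the arbitrariness of $\varepsilon$ then yield the claim.
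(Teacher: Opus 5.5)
The paper gives no proof of this lemma; it is stated without proof as a generalization of Lemma~\ref{lemma:lemma13}. Your argument is the natural vector version of the paper's proof of Lemma~\ref{lemma:lemma13}: exact Gaussianity of $\tilde{Z}$ given $\mathbf{X}$, the $\theta_0$-uniform tail bound on $\|\Delta\|_\infty$ from Theorem~\ref{thm:thm8}, and a two-sided sandwich using consistency of $\hat{\sigma}$. It is correct.

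It is also more complete than that template. The paper's proof of Lemma~\ref{lemma:lemma13} invokes Theorem~\ref{thm:thm8} to make $\Delta_i/(\hat{\sigma}[M\hat{\Sigma}M^\top]_{ii}^{1/2})$ negligible, but it never bounds $[M\hat{\Sigma}M^\top]_{ii}$ away from zero. Your Cauchy--Schwarz bound $\sigma_{\min}(B)\ge(1-k\mu)^2/(kK)$ supplies this step; for $k=1$ it reads $[M\hat{\Sigma}M^\top]_{ii}\ge(1-\mu)^2/\hat{\Sigma}_{ii}$. In the multivariate case a bound on the diagonal of $B$ alone would not control $B^{-1/2}\Delta_R$, so your eigenvalue step is exactly the ingredient the extension needs.

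One small repair is needed. The claim that $\tilde{Z}\sim N(0,I_k)$ unconditionally presupposes $B$ is invertible for every $\mathbf{X}$. You only know this on the $\mathbf{X}$-measurable event $\mathcal{A}_n=\mathcal{E}_n(\sqrt{C_{\min}}/2,s_0,3/2)\cap\mathcal{G}_n(a)$, for $n$ large. Work on $\mathcal{A}_n$ and use $\mathbb{P}(\{\tilde{Z}\le y\}\cap\mathcal{A}_n)=\mathbf{\Phi}_k(y)\,\mathbb{P}(\mathcal{A}_n)$. This only adds $\mathbb{P}(\mathcal{A}_n^c)\to 0$, uniformly in $\theta_0$, to each side of your sandwich.
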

\noindent
Mirroring the univariate inferential capabilities provided by Lemma \ref{lemma:lemma13}, Lemma \ref{lemma:lemma19} provides the necessary theoretical results to move from marginal to simultaneous confidence regions. We can then define the valid confidence region $J_R(\alpha) \subseteq \mathbb{R}^k$ as 
\begin{equation}
    J_{R}(\alpha) \equiv \hat{\theta}_R^u + (Q^{(n)}_{R,R})^{1/2}\mathcal{C}_{k,\alpha}
\end{equation}
Such that 
\begin{equation}
    \lim_{n \to \infty} \mathbb{P}\left(\theta_{0, R} \in J_R(\alpha)\right)= 1-\alpha
\end{equation}
Where $\mathcal{C}_{k, \alpha}\subseteq \mathbb{R}^k$ is any Borel set such that $\int_{\mathcal{C}_{k, \alpha}} \phi(x) dx \ge 1-\alpha$ and $\phi(x) = \frac{1}{(2\pi)^{k/2}}\exp \left(- \frac{\|x\|^2}{2}\right)$. We can further construct a testing procedure that controls the family wise error rate (FWER) based on Bonferroni.
\begin{equation}
\widehat{T}^F_{i, \mathbf{X}}(y) = \begin{cases}
1 \quad \text{if }P_i \le \alpha/p \quad (\text{reject } H_{0,i})\\
0 \quad \text{otherwise} \quad (\text{accept } H_{0,i})
\end{cases}
\end{equation}
It then follows from the Bonferroni inequality that the asymptotic family-wise error rate is bounded above by $\alpha$:
\begin{equation}
\limsup_{n\to\infty} \text{FWER}(\widehat{T}^F,n) \le \alpha    
\end{equation}
\subsubsection{Non-Gaussian Noise}
\label{sec:2.3.4}
The inferential guarantees for the simultaneous confidence region $J_R(\alpha)$, the individual hypothesis tests, and the FWER-controlling procedure $\widehat{T}^F$ naturally extend to non-Gaussian noise, provided the independent errors $W_i$ have zero mean, variance $\sigma^2$, and a bounded moment $\mathbb{E}[|W_i|^{2+a}] \le C\sigma^{2+a}$ for some $a > 0$. To ensure the asymptotic normality required for valid confidence intervals and $p$-values, the scaled summands of the error, denoted as $\xi_j=m_i^\top X_j W_j/(\sigma[m_i^\top \hat{\Sigma}m_i]^{1/2})$, must satisfy the Lindeberg condition. Namely, for every $\varepsilon > 0$, almost surely:
\begin{equation}
\lim_{n \to \infty} \frac{1}{n} \sum_{j=1}^n \mathbb{E}\left(\xi_j^2 \mathbb{I}_{\{|\xi_j| > \varepsilon\sqrt{n}\}} \mid X\right) = 0    
\end{equation}
Then $\sum_{j=1}^n \xi_k/\sqrt{n} \overset{d}\to N(0,1)$. This condition is satisfied by introducing an $\ell_\infty$ constraint to the debiasing optimization problem: $\|Xm_i\|_\infty \le n^\beta$ for a fixed $\beta \in (1/4, 1/2)$. By bounding the maximum influence of any single observation, the Lindeberg condition holds, preserving the asymptotic Gaussianity that guarantees the validity of the intervals, the tests, and the FWER upper bound $\limsup_{n\to\infty} \text{FWER}(\widehat{T}^F,n) \le \alpha$. 

To construct the debiasing matrix $M = (m_1, \dots, m_p)^T$ under the non-Gaussian regime, each row vector $m_i \in \mathbb{R}^p$ is obtained by solving the following convex optimization problem:
\begin{equation}
    \begin{aligned}
& \underset{m \in \mathbb{R}^p}{\text{minimize}} & & m^T \hat{\Sigma} m \\
& \text{subject to} & & \|\hat{\Sigma}m - e_i\|_\infty \le \gamma, \\
& & & \|Xm\|_\infty \le n^\beta \quad \text{ for arbitrary fixed } 1/4 <\beta<1/2
\end{aligned}
\end{equation}
Consequently, under the assumptions of Theorem \ref{thm:thm8}, the resulting p-values remain asymptotically valid in non-Gaussian settings; a formal statement and proof are omitted for brevity.
\section{Empirical Analysis}
\label{sec:empirical_analysis}
To evaluate the theoretical results of \citet{javanmard14a}, we replicate six of the original simulation scenarios (selected due to computational constraints) and the real-data application on the riboflavin (vitamin $B_2$) dataset. Building on this baseline, we compare their method with the Lasso projection estimator \citep{vandegeer2014asymptotically, zhang2014confidence, buhlmann2015misspecified} (also known as the LASSO projection estimator), which was cited but not implemented in the original study. As the original work only benchmarked against multisample splitting and a ridge projection estimator, this additional comparison provides a more comprehensive assessment of its practical performance.

All computations were performed on a first-generation Intel NUC laptop equipped with a 12th Gen Intel Core i7-1260P processor. The replication study was conducted using \texttt{R} version 4.2.2, leveraging the \texttt{parallel}, \texttt{doParallel}, and \texttt{foreach} packages to enable parallel processing \citep{rlanguage, doParalllelpkg, foreachpkg}. The implementation of the debiased LASSO proposed by \citet{javanmard14a} was sourced from the authors' repository (\href{https://web.stanford.edu/~montanar/sslasso/}{\texttt{https://web.stanford.edu/~montanar/sslasso/}}). Furthermore, routines for multisample splitting, the ridge projection estimator, and the LASSO projection estimator, alongside the riboflavin dataset, were accessed via the \texttt{hdi} \texttt{R} package \citep{hdi_pkg2015}.
 
\subsection{Simulation Study}
\label{sec:3.1}
We consider the linear model (\ref{eq:mod_mat_form}) where rows of the design matrix $\mathbf{X}$ are fixed i.i.d. realizations from the  $N(0, \Sigma)$, where $\Sigma \in \mathbb{R}^{p\times p}$ is a circulant symmetric matrix with entries $\Sigma_{jk}$ given as for $j \le k$
\begin{equation}
\label{eq:circulant_mat}
    \Sigma_{jk} = \begin{cases}
    1 &\text{if } k= j\\
    0.1 & \text{if } k \in\{j+1, \dots, j+5 \}\\
   & \text{or } k \in \{j+p-5, \dots, j+p-1\}\\
    0 & \text{for all other } j\le k.
    \end{cases}
\end{equation}

Figure \ref{fig:circulant_mat} illustrates an example of this matrix structure for $p=600$. To generate the true regression coefficients, we select a uniformly random support set $S \subseteq [p]$ of size $|S|=s_0$. The non-zero coefficients are set to $\theta_{0,i}=b$ for $i \in S$, and $\theta_{0,i}=0$ otherwise. Measurement errors are drawn as $W_i \sim N(0,1)$ for $i \in [n]$. We evaluate several configurations of $(n, p, s_0, b)$, and for each configuration, the results are averaged over 20 independent realizations of the model with fixed design and regression coefficients. In contrast to \citet{javanmard14a}, who explicitly tune a regularization parameter and model optimization constraint $\mu$, we rely on the default configurations for all methods to better reflect how these implementations are typically used in practice.

For each configuration, 20 independent realizations of measurement noise and for each parameter $\theta_{0,i}$ the average length of the corresponding confidence interval is computed. Denoted by $\text{Avglength}(J_i(\alpha))$ where $J_i(\alpha)$ is the estimated confidence interval for $\hat{\theta}_i^u$ defined in Equation (\ref{eq:ci_formula}) and the average is taken over all realizations. The following metric considered is
$$
\ell \equiv p^{-1} \sum_{i \in [p]} \text{Avglength}(J_i(\alpha))
$$
Along with the average length of intervals for the active and inactive parameters as:
\begin{equation}
    \ell_S \equiv s_0^{-1} \sum_{i \in S}\text{Avglength}(J_i(\alpha)), \quad \ell_S^c \equiv (p-s_0)^{-1}\sum_{i \in S^c}\text{Avglength}(J_i(\alpha))
\end{equation}
Similarly, the average coverage for the individual parameters is considered. i.e.
\begin{equation}
\widehat{\text{Cov}} = p^{-1}\sum_{i \in [p]}\widehat{\mathbb{P}}[ \theta_{0,i}\in J_i(\alpha)],    
\end{equation}
\begin{equation}
\widehat{\text{Cov}}_S = s_0^{-1}\sum_{i \in S}\widehat{\mathbb{P}}[ \theta_{0,i}\in J_i(\alpha)],
\end{equation}
\begin{equation}
\widehat{\text{Cov}}_{S^c} = s_0^{-1}\sum_{i \in S^c}\widehat{\mathbb{P}}[ \theta_{0,i}\in J_i(\alpha)]
\end{equation}
Where probabilities ($\widehat{\mathbb{P}}$) are calculated from 20 realizations per configuration. Tables \ref{tab:tab1}--\ref{tab:tab4} compare interval coverage, false positive (FP) rates, and statistical power (TP) between this replication and \citet{javanmard14a}. Tables \ref{tab:tab2} and \ref{tab:tab4} show the relevant results listed in \citet{javanmard14a}.

As detailed in Tables \ref{tab:tab1} and \ref{tab:tab2}, the replication produces tighter confidence intervals (e.g., $\ell \approx 0.15$ versus $\ell \approx 0.18$) and achieves a more precise alignment with the 95\% nominal coverage compared to the original results, which tended toward over-coverage. When evaluating error rates (Tables \ref{tab:tab3} and \ref{tab:tab4}), both studies successfully control FP rates near the nominal $\alpha = 0.05$ level. However, the replication achieves substantially higher statistical power (TP) across multiple methods in moderately sparse settings, most notably maintaining full power for multisample-splitting where the original study observed a significant drop.

Crucially, Table \ref{tab:tab3} highlights the performance of the LASSO projection estimator, which was not considered in the original study despite being cited. In direct comparison to the replicated Javanmard and Montanari method, the LASSO projection demonstrates superior sensitivity in low-signal regimes. For instance, under the $(1000, 600, 30, 0.1)$ configuration, the LASSO projection yields a TP of 0.8217 compared to the Javanmard and Montanari method's 0.7017, while still rigorously controlling the FP rate (0.0479 versus 0.0363). These findings indicate that the LASSO projecton estimator provides a more powerful framework for high-dimensional inference than the original Javanmard and Montanari procedure, improving signal detection without inflating Type I error.
\begin{figure}[h]
    \centering
    \includegraphics[width=\textwidth]{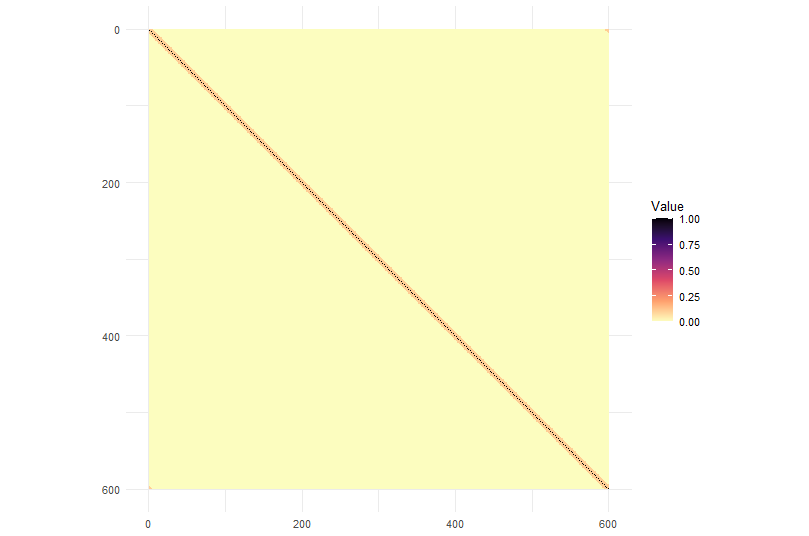}
    \caption{A visualization of the symmetric circulant matrix $\Sigma$ specified mathematically in Equation (\ref{eq:circulant_mat})}
    \label{fig:circulant_mat}
\end{figure}

\begin{table}[ht]
\centering
\begin{tabular}{|l|c|c|c|c|c|c|}
  \hline
\diagbox{Configuration}{Measure} & $\ell$ & $\ell_S$ & $\ell_{S^c}$ & $\widehat{Cov}$ & $\widehat{Cov}_S$ & $\widehat{Cov}_{S^c}$ \\ 
  \hline
(1000,600,10,0.5) & 0.1485 & 0.1485 & 0.1485 & 0.9592 & 0.9450 & 0.9594 \\ 
  (1000,600,10,0.25) & 0.1473 & 0.1471 & 0.1473 & 0.9583 & 0.9600 & 0.9583 \\ 
  (1000,600,10,0.1) & 0.1518 & 0.1521 & 0.1518 & 0.9602 & 0.9850 & 0.9597 \\ 
  (1000,600,30,0.5) & 0.1528 & 0.1525 & 0.1528 & 0.9657 & 0.9617 & 0.9659 \\ 
  (1000,600,30,0.25) & 0.1527 & 0.1528 & 0.1527 & 0.9680 & 0.9750 & 0.9676 \\ 
  (1000,600,30,0.1) & 0.1590 & 0.1588 & 0.1590 & 0.9711 & 0.9600 & 0.9717 \\ 
   \hline
\end{tabular}
\caption{Replicated results for synthetic data. Results correspond to 95\% confidence intervals.} 
\label{tab:tab1}
\end{table}

\begin{table}[ht]
\centering
\begin{tabular}{|l|c|c|c|c|c|c|}
\hline
\diagbox{Configuration}{Measure} & $\ell$ & $\ell_S$ & $\ell_{S^c}$ & $\widehat{\text{Cov}}$ & $\widehat{\text{Cov}}_S$ & $\widehat{\text{Cov}}_{S^c}$ \\ \hline
(1000, 600, 10, 0.5) & 0.1870 & 0.1834 & 0.1870 & 0.9766 & 0.9600 & 0.9767 \\
(1000, 600, 10, 0.25) & 0.1757 & 0.1780 & 0.1757 & 0.9810 & 0.9000 & 0.9818 \\
(1000, 600, 10, 0.1) & 0.1809 & 0.1823 & 0.1809 & 0.9760 & 1 & 0.9757 \\
(1000, 600, 30, 0.5) & 0.2107 & 0.2108 & 0.2107 & 0.9780 & 0.9866 & 0.9777 \\
(1000, 600, 30, 0.25) & 0.1956 & 0.1961 & 0.1956 & 0.9660 & 0.9660 & 0.9659 \\
(1000, 600, 30, 0.1) & 0.2023 & 0.2043 & 0.2023 & 0.9720 & 0.9333 & 0.9732 \\
\hline
\end{tabular}
\caption{Simulation results for the synthetic data described in \citep{javanmard14a}. The results corresponds to 95\% confidence intervals.}
\label{tab:tab2}
\end{table}
\begin{table}[ht]
\centering
\begin{tabular}{|c|c|c|c|c|c|c|c|c|}
  \hline & \multicolumn{2}{c|}{ \makecell{Javanmard \\ \& Montanari}} & \multicolumn{2}{c|}{ \makecell{Multisample- \\ splitting}} & \multicolumn{2}{c|}{ \makecell{Ridge projection  \\ estimator}} & \multicolumn{2}{c|}{ \makecell{LASSO projection \\  estimator}} \\ \hline Configuration & FP & TP & FP & TP & FP & TP & FP & TP \\ 
  \hline
  \hline
(1000,600,10,0.5) & 0.0514 & 1 & 0 & 1 & 0.0420 & 1 & 0.0488 & 1 \\ 
  (1000,600,10,0.25) & 0.0547 & 1 & 0 & 1 & 0.0385 & 1 & 0.0490 & 1 \\ 
  (1000,600,10,0.1) & 0.0495 & 0.7700 & 0 & 0 & 0.0454 & 0.4750 & 0.0508 & 0.8550 \\ 
  (1000,600,30,0.5) & 0.0531 & 1 & 0 & 1 & 0.0352 & 1 & 0.0418 & 1 \\ 
  (1000,600,30,0.25) & 0.0518 & 1 & 0 & 0.9967 & 0.0413 & 0.9967 & 0.0433 & 1 \\ 
  (1000,600,30,0.1) & 0.0363 & 0.7017 & 0 & 0.0200 & 0.0384 & 0.4383 & 0.0479 & 0.8217 \\ 
   \hline
\end{tabular}
\caption{Replicated simulation results for false postive rates (FP) and and true positive rates computed at significance level $\alpha = 0.05$.} 
\label{tab:tab3}
\end{table}

\begin{table}[ht]
\centering
\begin{tabular}{|c|c|c|c|c|c|c|}
 \hline & \multicolumn{2}{c|}{ \makecell{Javanmard \\ \& Montanari}} & \multicolumn{2}{c|}{ \makecell{Multisample- \\ splitting}} & \multicolumn{2}{c|}{ \makecell{Ridge projection  \\ estimator}} \\ \hline 
Configuration & FP & TP & FP & TP & FP & TP \\ \hline
(1000, 600, 10, 0.5) & 0.0452 & 1 & 0 & 1 & 0.0284 & 0.8531 \\
(1000, 600, 10, 0.25) & 0.0393 & 1 & 0 & 0.4 & 0.02691 & 0.7506 \\
(1000, 600, 10, 0.1) & 0.0383 & 0.8 & 0 & 0 & 0.2638 & 0.6523 \\
(1000, 600, 30, 0.5) & 0.0433 & 1 & 0 & 1 & 0.0263 & 0.8700 \\
(1000, 600, 30, 0.25) & 0.0525 & 1 & 0 & 0.4 & 0.2844 & 0.8403 \\
(1000, 600, 30, 0.1) & 0.0402 & 0.7330 & 0 & 0 & 0.2238 & 0.6180 \\
\hline
\end{tabular}
\caption{Simulation results for the synthetic data described in \citep{javanmard14a}. The false positive rates (FP) and the true positive rates (TP) are computed at significance level $\alpha = 0.05$.}
\label{tab:tab4}
\end{table}

\subsection{Data Analysis}
\label{sec:3.2}
To demonstrate our approach on empirical data, we analyze the riboflavin production dataset previously examined by \citet{javanmard14a}. Originally introduced by \citet{buhlmann2014high}, the dataset comprises $n=71$ samples and $p=4,088$ covariates representing individual gene expression levels. The continuous response variable measures the log-transformed riboflavin production rate, which is modelled linearly against the log-transformed gene expression covariates.

Figure \ref{fig:data_analysis} shows that Multisample-splitting proved overly conservative with estimated confidence intervals which covered the whole range of the covariate distribution, whereas the debiased projection estimators and the \citet{javanmard14a} approach successfully identified globally significant genes such as \texttt{ARGF\_at} and \texttt{LYSC\_at}. The Javanmard and Montanari method exhibited the highest precision , consistently yielding the narrowest confidence intervals and isolating signal where the Ridge and Lasso projection estimators were less sensitive. These results highlight the substantial precision gains achievable through the optimized de-biasing procedure proposed by \citet{javanmard14a}.

\begin{figure}[h]
    \centering
    \includegraphics[width=\textwidth]{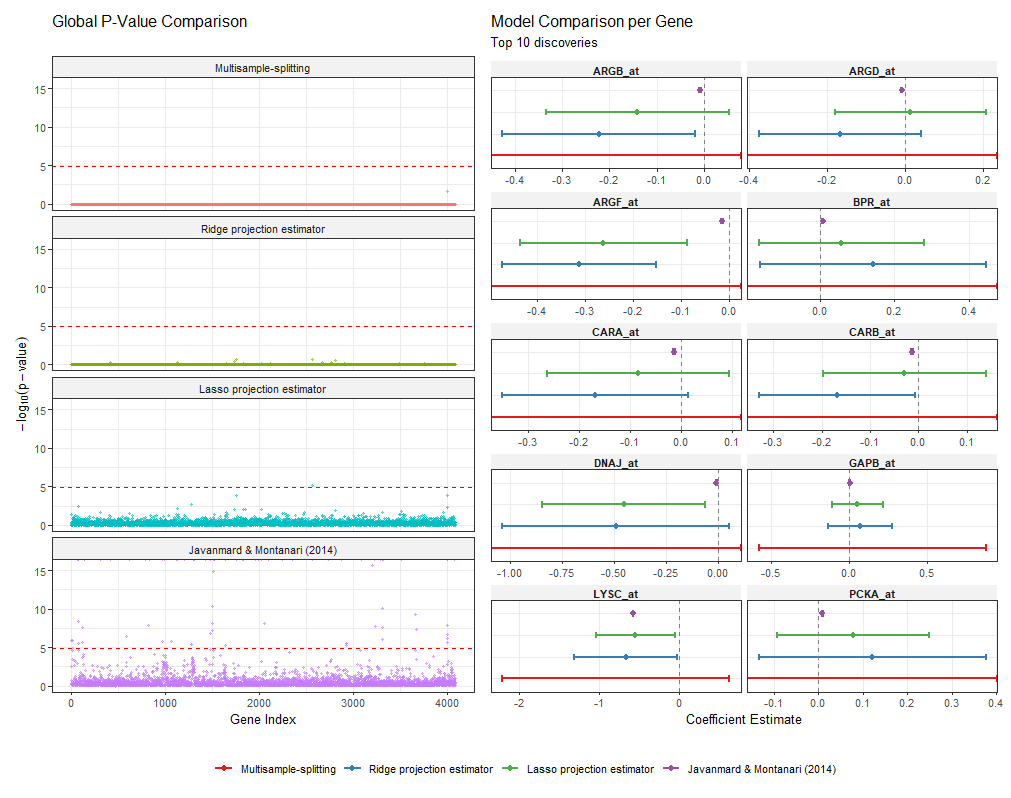}
    \caption{Comparative high-dimensional inference on the riboflavin dataset ($n=71, p=4,088$). The Manhattan plots (left) display the global p-value distribution relative to a Bonferroni-corrected threshold (red line), while faceted forest plots (right) provide 95\% confidence intervals for the top 10 genes.}
    \label{fig:data_analysis}
\end{figure}

\section{Discussion \& Conclusion}
\label{sec:discussion}
The transition of high-dimensional regression from pure variable selection to formal statistical inference marks a critical methodological advancement. Because the non-linear $\ell_1$ penalty of the standard LASSO introduces inherent shrinkage bias, classical uncertainty quantification via confidence intervals and p-values was precluded. The framework proposed by \citet{javanmard14a} successfully bridges this gap by constructing an optimal de-biasing matrix that corrects the LASSO estimate while explicitly minimizing asymptotic variance. The resulting theoretical guarantees for the estimator $\hat{\theta}^u$ related to its asymptotic normality, resilience to non-Gaussian noise, and adaptability for simultaneous testing provide a rigorous mathematical foundation for classical statistical inference in high-dimensional setting when $p \gg n$, under suitable sparsity assumptions.

Empirical evaluations across replications of the simulation study and the riboflavin dataset reveal a nuanced narrative regarding the practical performance of these estimators, highlighting a critical divergence between idealized settings and severely sample-starved, real-world applications. In the simulation study, multisample-splitting \citep{Meinshausen_Buhlman_2009} maintained strong statistical power, and the LASSO projection estimator \citep{zhang2014confidence} outperformed the \citet{javanmard14a} method by demonstrating superior sensitivity in low-signal regimes. However, this dynamic reversed in a real high-dimensional genomic data setting. Furthermore, the Javanmard and Montanari estimator achieved the highest precision and successfully isolated empirical signals where the LASSO projection lacked sensitivity. This discrepancy suggests that while standard projection estimators excel under well-behaved, circulant covariance structures, the explicit convex optimization of the de-biasing matrix $M$ employed by \citet{javanmard14a} adapts far more robustly to the complex correlation networks inherent in real biological data. While modern projection methods offer robust power in idealized scenarios, optimized de-biasing matrices produced by \citep{javanmard14a} prove to be dominant for maximizing signal detection and enabling inference in noisy, high-dimensional settings.

To provide a more equitable evaluation of these estimators, it is recommended that future simulation studies employ design matrices that intentionally weaken or violate the restricted eigenvalue and compatibility conditions. The circulant covariance structure used in the simulation study features minimal off-diagonal correlation (as shown in Figure \ref{fig:circulant_mat}), creating an idealized setting where the LASSO projection estimator excels. Furthermore, these settings fail to capture the complex correlation structures inherent in real-world biological data. By benchmarking against highly correlated data generating processes, future research can more accurately demonstrate the practical necessity of the debaised estimator proposed by \citet{javanmard14a}.
\clearpage
\bibliographystyle{apalike}
\bibliography{references}

\newpage


\label{supp:2} 
\end{document}